\tikzstyle{node}=[circle, inner sep = 0pt, minimum size = 0.7em,fill=white,draw]
\tikzstyle{arc}=[]
\renewcommand{\vec}{\mathaccent"017E }
\theoremstyle{plain}
\newtheorem{theorem}{Theorem}
\newtheorem{lemma}{Lemma}
\theoremstyle{definition}
\newtheorem{example}{Example}
\newenvironment{lemenum}{\begin{compactenum}[(i)]}{\end{compactenum}}
\newcommand{\mfts}{\textsc{Minimum Feasible Tileset}\xspace}
\newcommand{\gmfts}{\textsc{Generalized Minimum Feasible Tileset}\xspace}
\newcommand{\pdsm}{\textsc{Exact Cover by $d$-Sets}\xspace}
\newcommand{\containment}{$\mathsf{NP} \subseteq \mathsf{coNP/poly}$\xspace}
\newcommand{\NP}{\ensuremath{\mathsf{NP}}\xspace}
\newcommand{\APX}{\ensuremath{\mathsf{APX}}\xspace}
\newcommand{\Oh}{\mathcal{O}}
\newcommand{\h}{\ensuremath{\mathcal{H}}\xspace}
\newcommand{\I}{\ensuremath{\mathcal{I}}\xspace}
\newcommand{\J}{\ensuremath{\mathcal{J}}\xspace}
\newcommand{\N}{\ensuremath{\mathbb{N}}\xspace}
\renewcommand{\P}{\ensuremath{\mathcal{P}}\xspace}
\newcommand{\R}{\ensuremath{\mathcal{R}}\xspace}
\renewcommand{\S}{\ensuremath{\mathcal{S}}\xspace}
\newcommand{\T}{\ensuremath{\mathcal{T}}\xspace}
\newcommand{\partition}{\Pi}
\newcommand{\introduceproblem}[3]{%
\begin{quote}
#1\\
\textbf{Input:} #2\\
\textbf{Problem:} #3
\end{quote}}
\newcommand{\cref}[1]{(\ref{#1})\xspace}
\title{The Minimum Feasible Tileset problem\thanks{An extended abstract of this article appeared at the \emph{12th Workshop on Approximation and Online Algorithms}, Wroc\l{}aw, September 2014~\cite{DKS14}. In comparison, apart from full proof details, the present article additionally contains examples and an APX-hardness proof. The authors gratefully acknowledge support by the Alexander von Humboldt-Foundation (Yann Disser), the `Excellence Initiative' of the German Federal and State Governments and the Graduate School CE at TU Darmstadt (Yann Disser), the German Research Foundation (DFG), projects KR 4286/1 (Stefan Kratsch) and NI 369/12 (Manuel Sorge), the Israel Science Foundation, grant no. 551145/14 (Manuel Sorge), and the People Programme (Marie Curie Actions) of the European Union's Seventh Framework Programme (FP7/2007-2013) under REA grant agreement number 631163.11 (Manuel Sorge).}}
\author[1]{Yann Disser}
\affil[1]{Institut für Mathematik, Graduate School CE, TU Darmstadt, Germany\\\texttt{disser@mathematik.tu-darmstadt.de}}
\author[2]{Stefan Kratsch}
\affil[2]{Institut für Informatik, Humboldt-Universität zu Berlin, Berlin, Germany\\\texttt{kratsch@informatik.hu-berlin.de}}
\author[3]{Manuel Sorge}
\affil[3]{Institut für Softwaretechnik und Theoretische Informatik, TU Berlin, Germany and Department of Industrial Engineering and Management, Ben-Gurion University of the Negev, Beer Sheva, Israel\\\texttt{sorge@post.bgu.ac.il}}
\begin{document}

\maketitle

\begin{abstract}
 We introduce and study the \mfts problem: Given a set of symbols and subsets of these symbols (scenarios), find a smallest possible number of pairs of symbols (tiles) such that each scenario can be formed by selecting at most one symbol from each tile.
 We show that this problem is \APX-hard and that it is \NP-hard even if each scenario contains at most three symbols. Our main result is a 4/3-approximation algorithm for the general case.
 In addition, we show that the \mfts problem is fixed-parameter tractable both when parameterized with the number of scenarios and with the number of symbols.
\end{abstract}
\section{Introduction}\label{section:introduction}

Consider the general assignment problem where several devices (e.g., workers, robots, microchips, \ldots) each can be used in one of~$k$ functions/modes at a time (e.g., employing different skills, tools, instruction sets, \ldots). Given a set of scenarios, the goal is to assign~$k$ different functions to each device, such that, for each scenario, all functions requested by the scenario are available simultaneously. In this paper, we initiate the study of this problem for~$k=2$ and the case that each function is requested at most once by each scenario. Formally, we study the following problem (we use ``\emph{tile}'' instead of ``\emph{device}'' to intuitively capture the fact that a device/tile has two modes/sides).

\introduceproblem{\mfts}
{A universe of symbols $F$, scenarios $\S \subseteq 2^F \setminus \{F\}$.}%
{Find a minimum-size tileset~\T that is feasible for all scenarios in~$\S$.}
In the above, a \emph{tile} is a two-element subset of~$F$ and we refer to (multi-)sets of tiles as \emph{tilesets}. 
A tileset~\T is \emph{feasible} for scenario $S$ if we can produce all symbols in $S$ by taking at most one symbol from each tile in \T. Formally, a tileset~\T is feasible for a scenario~$S \subset F$ if there is a mapping~$\phi \colon \T \to F$, such that~$\phi(T) \in T$ for all~$T \in \T$, and~$S\subseteq \phi[\T] := \{\phi(T)\mid T\in\T\}$. 
By definition, no scenario contains all symbols of $F$. Note that such a scenario would require $|F|$ tiles, making the problem trivial.
Similarly, we may assume that all symbols in~$F$ appear in at least one scenario, otherwise we can simply remove each symbol that does not occur in any scenario.
Finally, the requirement that tiles contain no less than two symbols can be met by arbitrarily assigning a second symbol to all tiles of cardinality one. 

\newcommand{\tile}[2]{%
  \begin{tikzpicture}%
    \node[rectangle split, rectangle split parts=2, draw, anchor=center] at (0, 0) {\ensuremath{#1}\nodepart{two}\ensuremath{#2}};
  \end{tikzpicture}%
}
\begin{example}\label{ex:intro}
  Let us illustrate the problem with two instances of \mfts: 

  If our set of symbols is $F = \{A, B, C, 1, 2, 3\}$ and our set~$\S$ of
  scenarios consists of $\{A, B, 1, 2\}$, $\{A, C, 1, 3\}$, and
  $\{B, C, 2, 3\}$, then it is not hard to check that a feasible
  tileset is
  \begin{center}
    \tile{A}{B}, \tile{B}{C}, \tile{1}{2}, \tile{2}{3}.
  \end{center}
  Herein, each tile is represented by two adjoined boxes which
  correspond to the two modes in which we can use the tile. Clearly,
  the feasible tileset above is also of minimum size since each
  scenario requires at least four tiles.
  
  If we have $n \in \mathbb{N}$, $F = \{1, \ldots, 3n\}$, and our scenarios are all size-2
  subsets of~$F$, that is, $\S = \binom{F}{2}$, then a feasible tileset with $2n$ tiles is
  \begin{center}
    \tile{1}{2}, \tile{2}{3}, \hspace{.125cm} \tile{4}{5}, \tile{5}{6}, \hspace{.125cm} \ldots, \hspace{.125cm} \tile{3i - 2}{3i - 1}, \tile{3i -1}{3i}, \hspace{.125cm} \ldots, \hspace{.125cm} \tile{3n - 2}{3n - 1}, \tile{3n - 1}{3n}.
  \end{center}
  To see that the above tileset is feasible for $\binom{F}{2}$, pick
  any integer~$a \in \{1, \ldots, 3n\}$. If $a \neq 3i - 1$ for each
  $i \in \{1, \ldots, n\}$, then, whichever tile we pick to
  produce~$a$, each other integer from $1$ to $3n$ is available on
  some other tile. If $a = 3i - 1$ for some $i \in \{1, \ldots, n\}$,
  then, to produce~$a$ and any other
  integer~$b \in \{1, \ldots, 3n\} \setminus \{a\}$, we can pick an arbitrary tile
  for~$b$ and $a$ will be available on another tile. Hence, we can
  produce any scenario containing two distinct integers from 1 to $3n$
  by two distinct tiles. It will become clear later, that each
  feasible tileset for the instance~$(F, \binom{F}{2})$ of \mfts\
  contains at least~$2n$ tiles.
\end{example}

Apart from practical motivations, \mfts is appealing from a structural point of view. 
In this work we exhibit equivalent definitions for the problem which are interesting in their own right. 
At first glance, \mfts is a covering problem since we must cover all scenarios using tiles that can each cover one of the tile's two symbols in each scenario. It turns out that the problem can also be phrased as a packing/partitioning problem, but with an objective function different from the classical one in terms of number of packed objects or sets (see Section~\ref{section:hardness}). In addition, having tiles be symbol sets of size two suggests a graph interpretation where we are asked to find a minimum set of edges such that for 
each scenario
there is an orientation where each vertex has indegree at least one. 
For our presentation however, we favor the tileset formulation, since it
most naturally generalizes to the original assignment problem with tiles of larger sizes and scenarios which contain multiple copies of the same symbols.
Also, the \mfts interpretation appears suitable for studying the effect of parameters, such as the number of symbols/scenarios, on the complexity.

\paragraph{Results and Outline.}
\looseness=-1 We analyze the structure of the graph that has the tiles of a minimum cardinality tileset as its edges, and show that this graph is always (wlog.) a forest. In fact, only the component structure of this forest matters: We may replace trees by arbitrary trees spanning the same components without affecting the feasibility of the corresponding tileset~(\autoref{section:structure}). This lets us view \mfts as a partitioning problem, which in turn allows us to prove \NP-completeness even when scenarios have size at most three and lets us show \APX-hardness for the general case~(\autoref{section:hardness}). As our main result, we complement the hardness with a 4/3-approximation algorithm (for scenarios of arbitrary sizes) inspired by the component structure of the optimum solution~(\autoref{section:approximation}). We show that the problem is fixed-parameter tractable with respect to 
the number of scenarios~(\autoref{section:numberofscenarios}) and the number of symbols~(\autoref{section:numberofsymbols}), respectively. We also observe that, when each scenario has size at most~$d$, a polynomial-time compression of an arbitrary instance to $O(|F|^d)$ bits is possible without loosing the information about the size of the optimum solution and such a compression to $O(|F|^{d - \epsilon})$ bits is unlikely (\autoref{section:numberofsymbols}). Finally, we provide a preliminary result on the relevant variant of \mfts\ where the scenarios are multisets rather than sets and show that also this case is fixed-parameter tractable with respect to the number of symbols (\autoref{section:numberofsymbols}).

\paragraph{Related Work.}
The problem most closely related to \mfts is arguably \textsc{Set Packing}, as~$3$-\textsc{Set Packing} appears as a subproblem in our approximation algorithm and also as the source problem for our \NP-hardness reduction (in the form of \textsc{Maximum Three-Dimensional Matching}). \textsc{Set Packing} has been extensively studied for both approximability and parameterized complexity  (see, e.g.,~\cite{BansalCS09,Cygan13,SviridenkoW13} and~\cite{DellM12,Koutis08} for some recent results). The main difference between the two problems is that \textsc{Set Packing} is a maximization problem whereas \mfts seeks to minimize the size of a feasible tileset --- a measure that is only indirectly related to the number of sets~(scenarios). In particular, \textsc{Set Packing} becomes trivial for a bounded number of sets, whereas for \mfts we get a nontrivial polynomial-time algorithm via integer linear programming (see \autoref{section:numberofscenarios}).

As alluded to above, the \mfts problem can equivalently be seen as designing an edge-minimal graph on the set of symbols such that, for each scenario, the edges (tiles) can be oriented in such a way that all symbols in the scenario have indegree at least one.
The question whether a \emph{given} graph admits an orientation with certain properties has been studied in various settings. For example, Biedl et al.~\cite{Biedl/05} proposed an approximation algorithm for finding a balanced acyclic orientation. Another natural constraint on an orientation that has been studied is to prescribe degrees for each vertex~\cite{DisserMatuschke/16,Frank/76,Hakimi/65}.

More abstractly, we are looking for a graph on the set of symbols that fulfills a certain constraint for each scenario. The case where the subgraph induced by each scenario has to be connected is well-studied~\cite{BKMSV11,CKNSSW15,DuM88,GG13,JP87,BKKNS16}. In particular, it is \NP-hard to find the minimum number of edges needed~\cite{DuM88} and to decide whether a planar solution~\cite{BKMSV11,JP87} or a solution of treewidth at most three~\cite{GG13} exists.

\paragraph{Preliminaries.} For some positive
integer~$\ell \in \mathbb{N}$, we denote
$[\ell] := \{1, 2, \ldots, \ell\}$. For a set family~$\S$ we use $\bigcup \S$ as a shorthand for $\bigcup_{S \in \S} S$. Apart from standard Landau notation for running times, we also use the $O^*$ notation, which disregards factors that are polynomial in the input size. We use standard graph
notation, see the book by Diestel~\cite{reinhard_diestel_graph_2016}, for
example. For the relevant notions from parameterized complexity and
approximation complexity we refer to textbooks
\cite{flum_parameterized_2006,marek_cygan_parameterized_2015,rodney_g._downey_fundamentals_2013}
and refs.~\cite{Cre97,WS00}, respectively.

\section{Graph structure of tilesets}\label{section:structure}

The tiles in a tileset $\T$ over a universe of symbols $F$ can be viewed as the edges of the undirected (multi\nobreakdash-)~graph $G(\T):=(F,\T)$. In this section, we establish that there always exist optimal tilesets with a simple graph structure. This is made formal in the following lemma which will be useful in later sections.

\begin{lemma}\label{lem:struc}
  Let~$F$ be a universe of symbols, $\S$ a family of scenarios over~$F$, and $\T$~a tileset feasible for~$\S$. There is a tileset~$\T' \subseteq \binom{F}{2}$ feasible for~$\S$ such that $|\T'| \leq |\T|$ and~$G(\T')$ is a forest.
\end{lemma}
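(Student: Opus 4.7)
My plan is to construct $\T'$ directly as a spanning forest rather than trying to delete edges from cycles of $G(\T)$. I will choose a partition $\Pi$ of $F$ and let $\T'$ consist of the edges of an arbitrary spanning tree on each part of $\Pi$; then $\T'$ is automatically a simple forest with $|\T'| = |F| - |\Pi|$. Call a set $P \subseteq F$ \emph{safe} if $P \not\subseteq S$ for every $S \in \S$. If every part of $\Pi$ is safe, feasibility of $\T'$ comes for free: given $S \in \S$ and $P \in \Pi$, pick $v_P \in P \setminus S$ (nonempty by safety), root the spanning tree of $P$ at $v_P$, and orient each edge from parent to child; then every vertex of $P \setminus \{v_P\}$, and hence of $P \cap S$, has indegree exactly one, so the tile-to-head assignment covers $P \cap S$. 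Unioning over $P \in \Pi$ covers $S$. Thus the task reduces to producing $\Pi$ with every part safe and $|\Pi| \geq |F| - |\T|$.

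The key structural observation is the following. Let $V_1,\ldots,V_p$ be the connected components of $G(\T)$; they partition $F$, since by assumption every symbol appears in some scenario and hence in some tile of $\T$. If a component $V_j$ is unsafe, say $V_j \subseteq S$, then the tiles inside $V_j$ must cover all $|V_j|$ vertices of $V_j$ simultaneously in scenario $S$, and since each tile contributes at most one symbol, $V_j$ contains at least $|V_j|$ tiles. Writing $\mu := |\T| - |F| + p \geq 0$ for the cyclomatic number of $G(\T)$ and $b$ for the number of unsafe components, this shows that each unsafe component contributes at least one to $\mu$, giving $b \leq \mu$. The partition can then be built by a short case analysis: if $b = 0$, take $\Pi := \{V_1,\ldots,V_p\}$; if $0 < b < p$, merge all unsafe components into an arbitrary safe component $V^*$ and keep every other safe component as its own part, so that every part contains a safe subset and $|\Pi| = p - b \geq p - \mu = |F| - |\T|$; if $b = p$, then $p \leq \mu$ forces $|\T| \geq |F|$, so $\Pi := \{F\}$ works since $F \notin \S$.

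The delicate step, in my view, is this structural accounting: identifying \emph{safety} as the right condition on $\Pi$ and tying each unsafe component to at least one unit of cyclomatic excess. Once the bound $b \leq \mu$ is in hand, the partition construction and the orient-away-from-the-root covering argument are routine.
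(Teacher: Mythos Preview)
Your argument is correct, and it takes a genuinely different route from the paper's. The paper proceeds by \emph{local graph surgery}: it first collapses all cyclic components of $G(\T)$ into a single cycle (which does not increase the edge count since each cyclic component already carries at least as many edges as vertices), verifies feasibility via the orientation equivalence (Lemma~\ref{lem:feasible-orientation}) and the strict-subset feasibility of components (Lemma~\ref{lem:tree-feasible}), and then breaks the remaining cycle either by deleting an edge (if it spans all of $F$) or by rewiring one edge into an adjacent tree component. Your proof instead jumps straight to the partition viewpoint that the paper only reaches later in Theorem~\ref{cor:only_components_matter}: you isolate \emph{safety} as the right invariant, observe that every unsafe component must contribute at least one unit of cyclomatic excess (this is exactly the counting content hidden inside the paper's ``cyclic components have at least $|C_i|$ edges'' step), and then handle all merging in one shot rather than via an intermediate single-cycle stage.

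What each buys: the paper's staged construction reuses Lemmas~\ref{lem:feasible-orientation} and~\ref{lem:tree-feasible}, which it needs again for Theorem~\ref{cor:only_components_matter}, so the auxiliary machinery is amortised. Your argument is more self-contained and arguably cleaner for this lemma alone---the orient-away-from-a-vertex-outside-$S$ trick is the entire content of Lemma~\ref{lem:tree-feasible} done inline, and the inequality $b \le \mu$ packages the edge-counting neatly. Your case split $b=0$, $0<b<p$, $b=p$ mirrors the paper's final case split (no cycle / cycle plus tree / single cycle) but at the level of partitions rather than edges.
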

Note that each connected component of~$G(\T')$ has size at least two because
each symbol occurs in at least one scenario and hence is incident with at least one edge.

In the proof of \autoref{lem:struc} it is convenient to think of feasibility of~$\T$ via orientations of the graph~$G(\T)$. Let us say that an orientation of~$G(\T)$ is \emph{feasible} for the scenario~$S$ if each vertex in~$S$ has indegree at least one. It is easy to see that deciding whether \T is feasible for some scenario $S \subset F$ is equivalent to deciding whether there is a \emph{feasible} orientation of the edges of $G(\T)$ for $S$. We obtain the following lemma.

\begin{lemma}\label{lem:feasible-orientation}
  For every tileset $\T$ and scenario $S$ over a universe of symbols $F$ the following are equivalent.
  \begin{lemenum}
  \item $\T$ is feasible for~$S$,\label{en:fo1}
  \item there is a feasible orientation of~$G(\T)$ for~$S$,\label{en:fo2}
  \item for every connected component~$C$ of $G(\T)$, there is a feasible orientation of~$G(\T)$ for~$S \cap C$,\label{en:fo3}
  \item for every connected component~$C$ of $G(\T)$, tileset~$\T$ is feasible for~$S \cap C$.\label{en:fo4}
  \end{lemenum}
\end{lemma}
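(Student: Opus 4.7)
My plan is to prove the cycle of implications (i) $\Leftrightarrow$ (ii), (ii) $\Leftrightarrow$ (iii), and (iii) $\Leftrightarrow$ (iv). The first equivalence is essentially the informal observation recorded just before the lemma statement, the middle one contains the real content, and the last one falls out of the first by applying it component-wise.

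For (i) $\Leftrightarrow$ (ii) I would make precise the bijection sketched before the lemma: mappings $\phi \colon \T \to F$ with $\phi(T) \in T$ for every $T \in \T$ correspond one-to-one to orientations of $G(\T)$ via the rule that $\phi(\{u,v\}) = v$ iff the edge $\{u,v\}$ is oriented from $u$ toward $v$. Under this correspondence, the condition $S \subseteq \phi[\T]$ translates exactly to every vertex in $S$ being the head of at least one oriented edge, i.e., having indegree at least one.

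The direction (ii) $\Rightarrow$ (iii) is immediate since an orientation of $G(\T)$ feasible for $S$ is a fortiori feasible for any subset $S \cap C \subseteq S$. For (iii) $\Rightarrow$ (ii), I would glue component-wise feasible orientations: given an orientation $O_C$ of $G(\T)$ feasible for $S \cap C$ for each component $C$, define a global orientation $O$ by letting every edge $e$ inherit its orientation from $O_{C(e)}$, where $C(e)$ is the unique component containing the endpoints of $e$. For $v \in S$, take the component $C$ containing $v$; then $v \in S \cap C$ and every edge incident to $v$ lies inside $C$, so the indegree of $v$ in $O$ equals its indegree in $O_C$, which is at least one by feasibility of $O_C$.

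The equivalence (iii) $\Leftrightarrow$ (iv) then follows simply by invoking the already-established (i) $\Leftrightarrow$ (ii) with the same tileset $\T$ and with $S$ replaced by $S \cap C$ for each component $C$. The only step that conceivably requires care is the gluing in (iii) $\Rightarrow$ (ii), but even this is routine once one notes that the edge set of $G(\T)$ is partitioned by its connected components, so the component-wise orientations can never conflict; I do not expect any genuine obstacle in this proof.
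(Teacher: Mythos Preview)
Your proposal is correct and follows essentially the same approach as the paper: the paper proves the cycle (i) $\Rightarrow$ (ii) $\Rightarrow$ (iii) $\Rightarrow$ (i) (with (iii) $\Leftrightarrow$ (iv) derived from (i) $\Leftrightarrow$ (ii)), using the same bijection between mappings $\phi$ and orientations and the same component-wise gluing you describe. The only difference is organizational---you close the loop via (iii) $\Rightarrow$ (ii) while the paper goes (iii) $\Rightarrow$ (i)---but the underlying argument is identical.
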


\begin{proof}
  Note that it suffices to prove the equivalence of the first three statements, since equivalence of \cref{en:fo1} and \cref{en:fo2} implies equivalence of \cref{en:fo3} and \cref{en:fo4}.
  
  $\text{\cref{en:fo1}} \Rightarrow \text{\cref{en:fo2}}$: Assume that~$\T$ is feasible for~$S$ and let $\phi:\T \to F$ be the corresponding mapping with $\phi(T) \in T$ for all $T \in \T$, and $S\subseteq \phi[\T]$. We obtain an orientation of~$G(\T)$ by orienting each edge~$T\in\T$ towards~$\phi(T)$. Since $S\subseteq \phi[\T]$, each symbol in $S$ has indegree at least one and we have a feasible orientation of $G(\T)$ for $S$.

  $\text{\cref{en:fo2}} \Rightarrow \text{\cref{en:fo3}}$: Clearly, a feasible orientation of~$G(\T)$ for $S$ is, in particular, a feasible orientation for~$S \cap C$ for every connected component~$C$ of~$G(\T)$.

  $\text{\cref{en:fo3}} \Rightarrow \text{\cref{en:fo1}}$: Let $C_1, \ldots, C_k$ denote the connected components of $G(\T)$ and assume that there are feasible orientations $\vec{G}_1,\dots,\vec{G}_k$ for~$S \cap C_1,\dots, S \cap C_k$, respectively. Since $S \subseteq \bigcup_i C_i$, we obtain a feasible orientation $\vec{G}$ of $G(\T)$ for $S$ as $\vec{G}_1[C_1] \uplus \cdots \uplus \vec{G}_k[C_k]$. We define the mapping $\phi:\T \to F$ by setting $\phi(T)=s$ for each $T \in \T$, where $s$ is the symbol towards which the edge $T$ is oriented in $\vec{G}$. By definition, $\phi(T)\in T$ for all $T \in \T$, and, since $\vec{G}$ is feasible for $S$, we have $S \subseteq \phi[\T]$. The existence of the mapping~$\phi$ hence proves that \T is feasible for $S$.
\end{proof}

Using the notion of feasible orientations we now observe that connected components in~$G(\T)$ yield feasibility for each of their strict subsets.

\begin{lemma}\label{lem:tree-feasible}
  Let $\T$ be a tileset, $C$ a connected component of~$G(\T)$ and~$C' \subsetneq C$. Then, $\T$ is feasible for~$C'$.
\end{lemma}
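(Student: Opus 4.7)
The plan is to invoke \autoref{lem:feasible-orientation} and exhibit a feasible orientation of $G(\T)$ for $C'$. By the equivalence of \cref{en:fo1} and \cref{en:fo2}, constructing such an orientation suffices to establish feasibility of $\T$ for $C'$.

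First I would exploit the strict inclusion $C' \subsetneq C$ to pick some vertex $v \in C \setminus C'$. Then I would fix an arbitrary spanning tree $\T_C$ of the connected component $C$ and root it at $v$. Orienting every edge of $\T_C$ away from $v$ ensures that each non-root vertex of $C$ has indegree at least one, namely the incoming edge from its parent in the rooted spanning tree. Since $C' \subseteq C \setminus \{v\}$, every symbol in $C'$ receives indegree at least one under this orientation.

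To complete the orientation of $G(\T)$, I would orient the remaining edges arbitrarily: non-tree edges inside $C$, as well as all edges in other connected components of $G(\T)$. These choices do not affect the indegree lower bound already obtained for vertices of $C'$. The resulting orientation is therefore feasible for $C'$, and \autoref{lem:feasible-orientation} yields that $\T$ is feasible for $C'$.

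The argument has no real obstacle: the only point that requires care is ensuring that the excluded root $v$ is \emph{not} in $C'$, which is precisely where the strict inclusion $C' \subsetneq C$ is used. Without this strictness, every vertex of $C$ would need indegree at least one, which is impossible for a tree-edge orientation of a component that may itself be a tree.
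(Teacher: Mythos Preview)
Your proof is correct but takes a genuinely different route from the paper. The paper argues by induction on~$|C'|$: the base case orients a single incident edge toward the lone vertex, and the inductive step either handles the case where $G(\T)[C']$ is edgeless directly, or contracts an edge inside~$C'$ and applies the induction hypothesis to the smaller set. You instead give a one-shot construction: pick a spanning tree of~$C$, root it at a vertex of $C\setminus C'$, and orient all tree edges away from the root so that every non-root vertex (hence every vertex of~$C'$) acquires an incoming edge from its parent. This is more elementary and more transparent; the contraction machinery in the paper is unnecessary once one observes that orienting a rooted spanning tree away from the root gives every non-root vertex indegree at least one. Both arguments ultimately appeal to \autoref{lem:feasible-orientation}, but yours exploits the connectivity of~$C$ in a single step rather than unwinding it inductively.
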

\begin{proof}
  The proof is by induction over the size of~$C'$. If~$C'$ contains a single symbol, that is, $C' = \{s\}$, then we obtain a feasible orientation by orienting an arbitrary edge towards~$s$; such an edge exists because~$C'$ is part of the (larger) connected component~$C$. 
  Consider the case~$|C'| > 1$. First assume that $G(\T)[C']$ contains no edges, i.e., $C'$ is an independent set. Then, there is an edge in $G(\T)[C]$ for each symbol $s \in C'$, connecting $s$ to $C \setminus C'$. A feasible orientation for~$C'$ can simply be obtained by orienting these edges towards~$C'$. 
  Now, assume there is an edge~$\{s, s'\}$ in~$G(\T)[C']$ and consider the graph~$G'$ obtained by contracting~$\{s, s'\}$. By induction, there is a feasible orientation of~$G'$ for~$C''$, where $C''$ is obtained from~$C'$ by identifying~$s$ and~$s'$. Hence, there is an orientation of~$G(\T)$ such that all vertices in~$C'$ except one of~$\{s, s'\}$ have indegree at least one. We orient the edge~$\{s, s'\}$ towards the vertex of smaller indegree to obtain the desired feasible orientation of~$G(\T)$.
\end{proof}

We are ready for a proof of \autoref{lem:struc}. Intuitively, we observe that cycle components in~$G(\T)$ yield feasibility for \emph{any} of their subsets and hence are a safe replacement for every component with a large number of edges. Then we show how to break cycle components into trees.

\begin{proof}[Proof of \autoref{lem:struc}]
  We replace connected components in~$G(\T)$, maintaining feasibility of~$\T$ and without increasing the cardinality of~$\T$.
  
  Let~$C_1, \dots, C_k$ be  the connected components of~$G(\T)$ that contain a cycle, and let~$C = \bigcup_{i = 1}^k C_i$. We obtain a new tileset $\T'$ by replacing the edges in~$G(\T)[C]$ with a cycle on~$C$. (If $|C| = 1$, we introduce a self-loop and if $|C| = 2$, we introduce two parallel edges.) Since each component $C_i$, $i \in [k]$,  originally contained at least $|C_i|$ edges, and we only introduced $|C|$ new edges, the cardinality of~$\T'$ is not larger than the cardinality of~$\T$. 
We observe that~$\T'$ is still feasible. Consider an arbitrary scenario~$S \in \S$. Clearly, if~$S \cap C = \emptyset$, then $\T'$~is feasible for~$S$. Hence, assume~$S \cap C \neq \emptyset$. By \autoref{lem:feasible-orientation} there is a feasible orientation of~$G(\T)$ for~$S \setminus C$. This implies that there is still a feasible orientation of~$G(\T')$ for~$S \setminus C$. By \autoref{lem:feasible-orientation} it suffices to prove that there is a feasible orientation of~$G(\T')$ for~$S \cap C$. Indeed, since~$G(\T')[C]$ is a cycle, orienting the edges in one direction along the cycle yields a feasible orientation for any subset of $C$, and, in particular, for~$S \cap C$. Hence, $\T'$ is still feasible for every~$S \in \S$.

  By definition, every connected component~$C'$ of~$G(\T)$ outside of~$C$ is a tree. Hence, the connected components of $G(\T')$ are trees, with the exception of at most one component~$C$ that is a cycle. We now modify~$C$ in order to obtain our final feasible tileset~$\T''$ with the desired structure. 

  First, consider the case that~$C$ is the only connected component of~$G(\T')$. Then, we can remove an arbitrary tile from~$\T'$ to obtain~$\T''$: Since, by definition, $\S$ does not contain $F$ as a scenario, by \autoref{lem:tree-feasible}, $\T''$~is feasible for all scenarios~$S \in \S$. Clearly, $G(\T'')$ is a tree, as required. 
  
  Now assume that there is at least one tree component~$C'$ in~$G(\T')$ along with~$C$. Consider an arbitrary edge~$\{s, s'\}$ in~$C$ and an arbitrary vertex~$s'' \in C'$. We remove~$\{s, s'\}$ from~$\T'$ and instead add the edge~$\{s, s''\}$ to obtain the tileset~$\T''$. Clearly, $G(\T'')$ is a forest. It remains to prove that~$\T''$ is feasible for every scenario~$S \in \S$. By \autoref{lem:feasible-orientation}, $\T'$~is feasible for~$S \setminus C$ and, hence, so is~$\T''$. Because $C \cup C'$ is a connected component in~$G(\T'')$, \autoref{lem:tree-feasible} guarantees that $\T''$ is feasible for every~$S' \subsetneq (C \cup C')$ and, in particular, $\T''$ is feasible for~$S \cap C$. Hence, as~$\T''$ is feasible for~$S \cap C$ and~$S \setminus C$, applying \autoref{lem:feasible-orientation} we obtain that~$\T''$ is feasible for~$S$. Clearly, $G[\T'']$ is a forest, as required.
\end{proof}

Intuitively, Lemmas~\ref{lem:feasible-orientation} and~\ref{lem:tree-feasible} imply that only the partition of the symbols induced by the component structure of the graph of a tileset matters, but not the exact topology of each of the trees. This leads to the following.

\begin{theorem}\label{cor:only_components_matter}
  Let \S be a family of scenarios and \T be a tileset over symbols~$F$. If $G(\T)$ is a forest, then \T~is feasible for~\S if and only if no connected component~$C$ of~$G(\T)$ is fully contained in any scenario~$S\in\S$, i.e., $C \nsubseteq S$ for all scenarios~$S\in\S$ and all connected components~$C$ of~$G(\T)$.
\end{theorem}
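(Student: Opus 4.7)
The plan is to prove both directions separately, leaning almost entirely on \autoref{lem:feasible-orientation} and \autoref{lem:tree-feasible}, with the tree structure of $G(\T)$ providing the one additional combinatorial ingredient (a counting argument on edges vs.\ vertices).

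For the forward direction, I would argue by contrapositive: assume some connected component $C$ of $G(\T)$ is fully contained in a scenario $S \in \S$, and show $\T$ is infeasible for $S$. By \autoref{lem:feasible-orientation}, it suffices to show there is no feasible orientation of $G(\T)$ for $S \cap C = C$. Since $G(\T)$ is a forest, the component $C$ is a tree, so the subgraph $G(\T)[C]$ has exactly $|C|-1$ edges. A feasible orientation for $C$ would require every vertex in $C$ to have indegree at least one, so the sum of indegrees over $C$ would be at least $|C|$; but the edges of $G(\T)[C]$ contribute exactly $|C|-1$ to this sum (edges leaving $C$ contribute nothing because $C$ is a connected component), which is strictly less. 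This contradiction gives the forward direction.

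For the backward direction, suppose no component of $G(\T)$ is fully contained in any scenario. Fix an arbitrary scenario $S \in \S$ and let $C$ be any connected component of $G(\T)$. By assumption $C \nsubseteq S$, so $S \cap C \subsetneq C$. \autoref{lem:tree-feasible} then yields that $\T$ is feasible for $S \cap C$. Since this holds for every component $C$, \autoref{lem:feasible-orientation} (the equivalence of \cref{en:fo1} and \cref{en:fo4}) immediately gives that $\T$ is feasible for $S$, and hence for all of $\S$.

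The only nontrivial step is the edge-count argument in the forward direction; the rest is a direct application of the two preceding lemmas. Note that the forest hypothesis is essential only for the forward direction, since a component containing a cycle has at least $|C|$ edges and the counting obstruction disappears (consistent with the cycle-component argument in the proof of \autoref{lem:struc}).
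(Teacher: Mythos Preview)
Your proposal is correct and matches the paper's proof essentially line for line: both directions invoke \autoref{lem:feasible-orientation} and \autoref{lem:tree-feasible} in exactly the way you describe, and the forward direction hinges on the same edge-count observation that a tree on $|C|$ vertices has only $|C|-1$ edges. Your write-up is in fact slightly more explicit than the paper's (which just says the existence of a feasible orientation ``is absurd'' given the edge count), but the argument is identical.
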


\begin{proof}
  ``$\Rightarrow$'': Assume towards a contradiction that~\T is feasible, $G(\T)$ is a forest, and there is a scenario~$S \in \S$ and a component~$C$ of $G(\T)$ such that~$C \subseteq S$. By \autoref{lem:feasible-orientation} there is a feasible orientation of $G(\T)$ for~$S \cap C = C$. But this is absurd, because~$G(\T)$ is a forest and hence $G(\T)[C]$ contains only $|C| - 1$ edges.
  
  ``$\Leftarrow$'': For each component~$C$ of~$G(\T)$ and every scenario~$S\in\S$, we have that $C \cap S \subsetneq C$, since $C \nsubseteq S$. By \autoref{lem:tree-feasible}, we get that~\T is feasible for~$C \cap S$. Since this is true for all choices of~$C$ and~$S$, \autoref{lem:feasible-orientation} implies that~\T is feasible for~\S.
\end{proof}

\begin{example}
  We can now observe that for the instance
  $(F = \{1, \ldots, 3n\}, \S = \binom{F}{2})$ from \autoref{ex:intro}
  each feasible tileset contains at least~$2n$ tiles: Let $\T$ be a
  feasible tileset for $\S$. No connected component of $G(\T)$ can be
  contained in any set in~$\S$. Thus, each connected component has
  size at least three, meaning that there are at least~$n$ connected
  components. Since each such connected component induces at least two
  tiles, $\T$ contains at least $2n$ tiles.
\end{example}

\section{NP-hardness and APX-Hardness of Minimum Feasible Tileset}\label{section:hardness}

In this section we establish the following result.

\begin{theorem}\label{thm:restricted-hard}
  \mfts{} is \APX-hard. \mfts\ is \NP-hard even if each scenario has size at most three.
\end{theorem}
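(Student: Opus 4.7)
The plan is to prove membership in \NP{} via a bipartite matching certificate and hardness by reducing from \pdsm{} with $d=3$, leveraging the partition viewpoint supplied by \autoref{cor:only_components_matter}. For \NP{} membership, given a candidate tileset~$\T$, feasibility for each scenario~$S$ reduces to finding a matching saturating~$S$ in the bipartite graph having~$\T$ on one side, $S$ on the other, and each tile adjacent exactly to the symbols it contains; this runs in polynomial time per scenario.

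For hardness, given an instance $(X,\mathcal{C})$ of \pdsm{} with $|X|=3n$ and $\mathcal{C}\subseteq\binom{X}{3}$, I construct an \mfts{} instance with $F := X$, $\S := \binom{X}{2} \cup \bigl(\binom{X}{3}\setminus\mathcal{C}\bigr)$, and $\ell := 2n$. Every scenario has size at most three, the construction has polynomial size, and for the only nontrivial case $n\geq 2$ both the requirement $F\notin\S$ and the assumption that every symbol appears in some scenario hold automatically. By \autoref{lem:struc} it suffices to consider forest tilesets, and a forest on $F$ with $c$ components uses exactly $|F|-c$ tiles, so $|\T|\leq 2n$ is equivalent to the existence of a forest tileset with at least $n$ components. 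By \autoref{cor:only_components_matter} such a tileset is feasible iff no connected component is contained in any scenario, that is, iff $F$ admits a partition into at least $n$ blocks none of which lies inside a scenario. Since every pair of symbols is a scenario, every block has size at least three, and together with $|F|=3n$ this forces exactly $n$ size-three blocks; the triple scenarios then restrict each such block to $\mathcal{C}$. Consequently the constructed instance is a yes-instance iff $\mathcal{C}$ contains an exact cover of~$X$.

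The main obstacle is choosing the scenarios so that the partition encoding is tight: all pairs as scenarios rule out size-two blocks (forcing size at least three), while the complementary triples rule out the ``wrong'' size-three blocks (forcing membership in $\mathcal{C}$). Once this is set up, the structural lemmas do the heavy lifting, the scenario-size bound of three is immediate from the construction, and the verification of equivalence becomes essentially tautological.
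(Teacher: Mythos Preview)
Your proof is correct and follows essentially the same approach as the paper. The paper factors the argument through an explicitly named intermediate problem (\textsc{Fine Constrained Partition}) and proves its equivalence with \mfts{} separately before reducing \xts{} to it, whereas you inline this equivalence by invoking \autoref{lem:struc} and \autoref{cor:only_components_matter} directly; the choice of scenarios, the counting $|\T|=|F|-c$, and the ``pairs force size $\geq 3$, complementary triples force membership in~$\mathcal{C}$'' logic are identical.
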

\looseness=-1 Before proving \autoref{thm:restricted-hard}, let us check that the decision variant of \mfts, in which we want to check for feasible tilesets of size at most a given integer, is contained in \NP: A feasible tileset can be encoded using polynomially many bits with respect to~$|F|$. Verifying feasibility comes down to solving one bipartite matching problem for each scenario on an auxiliary graph that has an edge between each symbol in the scenario and every tile containing that symbol, which is possible in polynomial time. Thus we can infer from \autoref{thm:restricted-hard} that the decision variant of \mfts\ is \NP-complete.

We now prove \NP- and \APX-hardness of \mfts. For this, we first give a relation of \mfts\ to a partitioning problem. Let us say that, for a finite set of symbols~$F$ and a family of scenarios $\S \subseteq 2^F$, a partition~$\mathcal{P}$ of $F$ is \emph{admissible}, if for every $P \in \mathcal{P}$ and every $S \in \S$ we have $P \not\subseteq S$. We obtain the following.

\begin{lemma}\label{lem:partition-equiv}
  Let $F$ be a set of symbols and $\S \subseteq 2^F\setminus\{F\}$ a family of scenarios. There is a feasible tileset of size $\ell$ for $\S$ if and only if there is a partition of~$F$ which is admissible for $\mathcal{S}$ and comprises $|F| - \ell$~parts.
\end{lemma}

\newcommand{\fcp}{\textsc{Fine Constrained Partition}\xspace}

\begin{proof}

  ``$\Rightarrow$'': By \autoref{lem:struc} there is a feasible tileset~$\T'$ for~$\S$ of cardinality~$\ell$ such that~$G(\T')$ is a forest. The connected components~$C_1, \ldots, C_k$ of~$G(\T')$ induce a partition~$\mathcal{P}'$ which we claim to be admissible: Indeed, by \autoref{cor:only_components_matter} we have $C_i \nsubseteq S$ for all connected components~$C_i$, $i \in [k]$, and scenarios~$S \in \S$. Furthermore, since there are exactly $\ell$~edges in~$G(\T')$ and each connected component is a tree, we have $\ell = \sum_{i = 1}^k|C_i| - 1 = |F| - k$. 
  Hence, our partition has $k = |F| - \ell$ parts, as required.

  ``$\Leftarrow$'': Let~$\mathcal{P} = \{P_1, \ldots, P_p\}$ be an admissible partition with $|F| - \ell$~parts. We construct a tileset~$\T$ by setting~$G(\T)[P_i]$ to an arbitrary spanning tree for each~$i \in [p]$. Since $P_i \nsubseteq S$ for each~$S \in \S$ and each~$i \in [p]$, by \autoref{cor:only_components_matter}, $\T$ is feasible for~$\S$. 
  The number of tiles in~$\T$ is $\sum_{i = 1}^{p} |P_i| - 1 = |F| - p = |F| - (|F| - \ell) = \ell$, as required.
\end{proof}

\noindent Thus, \mfts\ is equivalent to finding a finest-possible partition, i.e. with maximum number of parts, of
the symbols such that no part in the partition is contained in any
scenario.

\newcommand{\bmtdm}{\textsc{Maximum Bounded 3\nobreakdash-Dimensional Matching}\xspace}
We now give a reduction from \bmtdm which is both \NP-hard and \APX-hard~\cite{Kann91}:

\introduceproblem{\bmtdm}
{Three pairwise disjoint sets~$X, Y, Z$, and a set $D \subseteq X \times Y \times Z$ of triples such that each element in $X \cup Y \cup Z$ occurs in at most three triples in~$D$.}
{Find a maximum-size \emph{three-dimensional matching} for $D$, i.e., a maximum-size subset~$D' \subseteq D$ such that no element of $X \cup Y \cup Z$ occurs in two triples in $D'$.}
\newcommand{\opttdm}{\ensuremath{\operatorname{opt}_{\text{3DM}}}}
\newcommand{\optft}{\ensuremath{\operatorname{opt}_{\text{FT}}}}
\begin{proof}[Proof of \autoref{thm:restricted-hard}]
  We give a PTAS-reduction from \bmtdm~\cite{Cre97}.  More precisely,
  given an instance $(X, Y, Z, D)$ and a desired approximation
  ratio~$r$, we construct an instance $(F, \S)$ of \mfts in time
  polynomial in the instance size for every fixed~$r$, subject to the
  following condition. 
  There is a function~$f \colon (0, 1) \to \mathbb{Q}$ such that for every $r \in (0,1)$ we have  $f(r) > 1$ and for a arbitrary given $f(r)$-approximate feasible tileset~$\T$ for $(F, \S)$ we can construct an $r$-approximate three-dimensional matching~$M$ for $(X, Y, Z, D)$ in
  polynomial time. We specify the function $f$ below (while ensuring that
  $f(r) > 1$).

  We first describe how to construct the \mfts instance $(F, \S)$ from
  a \bmtdm instance $(X, Y, Z, D)$. Set the universe
  $F := X \cup Y \cup Z$. We choose a function
  $g \colon (0, 1) \to \mathbb{N}$ with $g(r) \geq 3$ for all $r \in (0,1)$. 
  The precise function is given below. The
  scenarios $\S$ consist of all subsets of~$F$ that have size at most~$g(r)$ and do not
  contain any triple in $D$ as a subset (we interpret~$D$ as a family of
  three-element sets). Formally, $\S := \{S \subseteq F \mid |S| \leq g(r) \wedge \forall E \in D \colon E \setminus S \neq \emptyset\}$. This concludes the construction. Let
  $n := | X \cup Y \cup Z| = |F|$. Clearly, for any fixed $r$, we can carry out the construction
  in time $\Oh(n^{g(r) + 1})$, that is, in polynomial time in the
  instance size.

  Before we show how to compute an approximate three-dimensional matching from an
  approximate feasible tileset, we find a relation between the optimal
  solution sizes of the two instances. Note that, from each
  three-dimensional matching~$N$ we can construct an admissible
  partition~$\R$ of $F$ for $\S$ satisfying $|\R| = |N|$ as
  follows. Initially, take $\R = N$ (where $N$ is interpreted as a
  family of three-element sets). Then, replace an arbitrary part
  $P \in \R$ with $P \cup ((X \cup Y \cup Z) \setminus \bigcup
  \R)$. That is, add to $P$ all elements not covered by~$N$. By
  definition of $\S$, there is no set $S \in \S$ that contains any
  $P \in \R$, and hence $\R$ is admissible. Letting \opttdm\ denote the
  size of an optimal solution to the \bmtdm\ instance, we thus have
  $|\P^*| \geq \opttdm$ for an admissible partition~$\P^*$ containing
  the maximum number of parts. Lemma~\ref{lem:partition-equiv} implies
  that $|\P^*| = n - \optft$, where $\optft$ denotes the size of an optimal solution for the
  \mfts\ instance. Rearranging terms hence yields $\optft = n -
  |\P^*|$. Because of $|\P^*| \geq \opttdm$, we have
  \begin{equation}
    \optft\ \leq\ n - \opttdm\label{eq:optsols}.
  \end{equation}

  Now let $\T$ be an arbitrary $f(r)$-approximate feasible
  tileset~$\T$ for $(F, \S)$. We construct an $r$-approx\-i\-mate
  three-dimensional matching~$M$ for $(X, Y, Z, D)$ in polynomial time
  as follows. Along the way, we gather observations that allow us to
  prove that $M$ is $r$-approximate in the end.

  First, by
  Lemma~\ref{lem:partition-equiv} there is a partition~$\P$ of $F$
  which is admissible for $\S$ and has $n - |\T|$ parts. In other
  words, $\T$ has $n - |\P|$ tiles. (As the proof of Lemma~\ref{lem:partition-equiv} is constructive, it is not hard to check that $\P$
  can be computed in polynomial time.) As $\T$ is $f(r)$-approximate,
  $|\T| \leq f(r)\optft$ and hence, $n - |\P| \leq f(r)\optft$. Applying
  Inequality~\eqref{eq:optsols} we thus obtain
  \begin{equation}
    \label{eq:ptasadmiss}
    n - |\P| \ \leq\  f(r)(n - \opttdm).
  \end{equation}
  We create a partition~$\P^1 = \P^1_3 \cup \P^1_{g(r) + 1}$ from~$\P$ as follows. Obtain
  $\P^1_3$ by picking, for each part~$P \in \P$ which contains a triple of $D$, one triple $E \in D$ with $E \subseteq P$ and putting $E$ (as a set) into~$\P^1_3$. To create $\P^1_{g(r) + 1}$, if $|F \setminus \bigcup_{P \in P^1_3}P| < g(r) + 1$, then put $F \setminus \bigcup_{P \in P^1_3}P$ into $\P^1_{g(r) + 1}$ as a single set of size at most~$g(r)$. In this case we call $\P^1_{g(r) + 1}$ \emph{degenerate}. Otherwise, put $\P^1_{g(r) + 1}$ to be an
  arbitrary partition of $F \setminus \bigcup_{P \in \P^1_3}P$ into
  parts of size $g(r) + 1$ and, perhaps, one part of size at least~$g(r) + 2$ and at most $2g(r) + 1$. Note that $\P^1_3$ is admissible for~$\S$ because
  each triple in $\P^1_3$ is not contained in any set in $\S$ by
  definition of $\S$%
  . We claim that $|\P^1| \geq
  |\P|$. Clearly, for each part in $\P$ that contains a triple of $D$
  there is at least one part also in $\P^1$. Furthermore, since $\S$
  contains all $g(r)$-element sets which do not contain any triple
  of~$D$, each set $P \in \P$ that does not contain a triple from~$D$
  must contain at least~$g(r) + 1$ elements because $\P$ is
  admissible for~$\S$. Hence, $|\P^1| \geq |\P|$. From
  Inequality~\eqref{eq:ptasadmiss} it follows that
  \begin{equation}
    n - |\P^1| \ \leq \ f(r)(n - \opttdm).\label{eq:ptasmod}
  \end{equation}  
  Note that the three-element sets in $\P^1$, i.e., $\P^1_3$, form
  a three-dimensional matching for the instance $(X, Y, Z, D)$.
  The sets in $\P^1_3$
  will form our $r$-approximate three-dimensional matching~$M$ after
  one further augmentation step. The aim of this augmentation is to
  bound $|P^1_3|$ by a function of $|P^1_{g(r) + 1}|$%
  . This enables us to give a lower bound on $|M|$
  via the size of $\P^1$.

  Consider the following modification of~$\P^1$. If there is a triple
  in $D$ that is disjoint from $\bigcup_{P \in \P^1_3}P$, then add
  this triple to $\P^1_3$. If we now have $|F \setminus \bigcup_{P \in P^1_3}P| < g(r) + 1$, then replace the sets in $\P^1_{g(r) + 1}$ by the single set $F \setminus \bigcup_{P \in P^1_3}P$. Otherwise, replace $\P^1_{g(r) +1}$ by an arbitrary
  partition of $F \setminus \bigcup_{P \in \P^1_3}P$ into parts of
  size $g(r) + 1$ and, perhaps, one part of size at least~$g(r) + 2$ and at most~$2g(r) + 1$.

  We claim that the above two modification steps do not decrease the
  number of sets in~$\P^1$. This is clear if $\P^1_{g(r) + 1}$ was
  degenerate before applying them. Otherwise, we have
  $|F \setminus \bigcup_{P \in P^1_3}P| > g(r) + 3$ before applying
  the modification. Hence, each set in $\P^1_{g(r) + 1}$ had size at
  least $g(r) + 1 \geq 4$. Since we moved only three elements from
  these sets to $\P^1_3$ and afterwards repartition the remaining
  elements with sets of size at least 4, the total number of sets
  cannot decrease.

  As before, $\P_3^1$ remains admissible for~$\S$. Let $\P^2$ be the
  partition obtained by exhaustively applying the above modification,
  let $\P^2_3$ equal the resulting set~$\P^1_3$ and let
  $P^2_{g(r) + 1}$ be the resulting set~$\P^1_{g(r) + 1}$%
  . We define the three-dimensional matching $M$ as
  the family of three-element parts in~$\P^2 \cap D$, that is $M =
  \P^2_3$. As mentioned, we have $|\P^2| \geq |\P^1|$. From
  Inequality~\eqref{eq:ptasmod} it thus follows that
  \begin{equation}
    n - |\P^2| \ \leq \ f(r)(n - \opttdm).\label{eq:ptasmod2}
  \end{equation}
  It remains to show that $M$ is $r$-approximate (for appropriate
  functions $f(r)$ and $g(r)$).

  We now claim that $12 |P^2_3|/ (g(r) + 1) \geq |\P^2_{g(r) +
    1}|$. If $\P^1_{g(r) + 1}$ is degenerate and if this relation does
  not hold, then $|F|$ is upper bounded by
  $3 |\P^2_3| + g(r) - 1 < 3 (g(r) + 1) / 12 + g(r) - 1$, that is, $|F|$ is upper bounded by a
  constant. Hence, we may compute the optimal solution in constant
  time in this case. Thus, we may, without loss of generality, assume
  that the relation holds if $\P^1_{g(r) + 1}$ is degenerate.

  If $\P^1_{g(r) + 1}$ is not degenerate, we claim that the above
  modification of $\P^1$ is applicable as long as
  $12 |\P^1_3| < (g(r) + 1) |\P^1_{g(r) + 1}|$, where the right hand
  side bounds the number of unmatched elements. Indeed, since each
  element in $F = X \cup Y \cup Z$ is contained in at most three
  triples in~$D$, for each triple~$E$ in $\P^1_3$, there are at most
  twelve elements of $F$ whose incident triples in $D$ cannot be added
  to $\P^1_3$, because they overlap with~$E$. Hence, if
  $12 |\P^1_3| < (g(r) + 1) |\P^1_{g(r) + 1}|$, then there exists at
  least one element of $F$ whose incident triples do not overlap with
  any triple in~$\P^1_3$. This means that at least one triple will be
  added to $\P^1_3$ in the above modification step, because, without
  loss of generality, each element is in at least one triple. This
  indeed implies for partition~$\P^2$ (after exhaustive modification) that
  $12 |P^2_3|/ (g(r) + 1) \geq |\P^2_{g(r) + 1}|$. We thus have
  \begin{align*}
    n - |\P^2| &\ =\  n - (|\P^2_3| + |\P^2_{g(r) + 1}|)\\
    & \ \geq\ n - \left(1 + \frac{12}{g(r) + 1}\right)|\P^2_3|,
  \end{align*}
  and since $\P^2_3 = M$, in combination
  with Inequality~\eqref{eq:ptasmod2} we have
  \begin{equation*}
    n - \left(1 + \frac{12}{g(r) + 1}\right)|M| \ \leq\ f(r)(n - \opttdm).
  \end{equation*}
  Thus,
  \begin{equation}
    \label{eq:ptasmbound}
    \frac{g(r) + 1}{g(r) + 13} \cdot((1 - f(r)) n + f(r) \opttdm) \ \leq\ |M|.
  \end{equation}
  The same modification that we applied above to (the three-element
  sets of) $\P^1$ works for each three-dimensional matching. As we
  cannot improve a maximum three-dimensional matching, each element of
  $F$ is either matched --- of this type there are $3\opttdm$
  elements --- or it is in a triple together with a matched element --- of
  this type there are at most $12\opttdm$ elements because each element is in at
  most three triples in~$D$. Hence,
  $n \leq 3 \opttdm + 12 \opttdm = 15 \opttdm$. Note that
  $1 - f(r) < 0$. Inequality~\eqref{eq:ptasmbound} thus implies
  \begin{equation}
    \label{eq:approxfactor}
    \frac{g(r) + 1}{g(r) + 13} \cdot (15 - 14f(r)) \cdot \opttdm \ \leq\ |M|.
  \end{equation}
  We now define $f$ and $g$ by setting $g(r) := \max\{3, \lceil 13r/(1 - r) \rceil\}$ and
  \begin{equation*}
    f(r) \ :=\ \frac{-rg(r) - 13r + 15g(r) + 15}{14g(r) + 14}.
  \end{equation*}
  Clearly, $g(r) \geq 3$ as required. We claim also that $f(r) > 1$. To see this, consider subtracting the denominator from the numerator in $f(r)$ to obtain~$x$, that is, 
  \begin{equation*}
    x\ :=\ -rg(r) - 13r + 15g(r) + 15 - 14g(r) - 14 \ =\ -rg(r) - 13r + g(r) + 1\ =\ (1 - r)g(r) - 13r + 1.
  \end{equation*}
  If $3 \leq 13r/(1 - r)$, then
  $x \geq (1 -r)\cdot\frac{13r}{1 - r} - 13r + 1 > 0$, that is
  $f(r) > 1$. If $3 > 13r/(1 - r)$ then $r < 3/16$. Thus,
  $x \geq (1 - 3/16) \cdot 3 - 13 \cdot 3/16 + 1 = 1 > 0$ and again $f(r) > 1$. Thus,
  these are suitable definitions. All that remains is to show that the approximation factor in Inequality~\eqref{eq:approxfactor} is at least~$r$, that is,
  \begin{equation}
    \label{eq:ptasfactor}
    \frac{g(r) + 1}{g(r) + 13} \cdot (15 - 14f(r)) \ \geq\ r.
  \end{equation}
  Observe that
  \begin{equation*}
    15 - 14f(r) \ =\ \frac{15g(r) + 15}{g(r) + 1} - \frac{15g(r) + 15 - rg(r) - 13r}{g(r) + 1} \ =\ \frac{rg(r) + 13r}{g(r) + 1}. %
  \end{equation*}
  Hence,
  \begin{equation*}
    \frac{g(r) + 1}{g(r) + 13} \cdot (15 - 14f(r))\ =\ \frac{rg(r) + 13r}{g(r) + 13} \ =\ r.
  \end{equation*}
This implies that Inequality~\eqref{eq:ptasfactor} holds. Hence, there is a PTAS-reduction
  from \bmtdm\ to \mfts.

  \NP-hardness of the decision version of \mfts\ follows from the
  following modification to the reduction above. Instead of \bmtdm\ we
  reduce from the \NP-hard decision problem which asks whether there
  is a three-dimensional matching with $n/3$
  triples~\cite{GareyJohnson79}. We use the reduction above and set
  $g(r) = 3$ and the desired feasible tileset size to~$2n/3$. As mentioned, this can be done in polynomial time. For the correctness, each
  three-dimensional matching of size~$n/3$ is also an admissible
  partition for $\S$. Hence, it implies a feasible tileset of size
  $n - n/3 = 2n/3$ by Lemma~\ref{lem:partition-equiv}. In the reverse direction, each feasible tileset of
  size $2n/3$ implies an admissible partition with $n/3$ parts by
  Lemma~\ref{lem:partition-equiv}. Each of these
  parts is of size three because $\S$ contains all size-two subsets of~$F$. Among sets of size three, the only sets not contained in $\S$ are precisely the sets in~$D$; hence, each part of an admissible partition is in~$D$. That is, any admissible partition is a
  three-dimensional matching as well.
\end{proof}

\section{A 4/3-approximation for Minimum Feasible Tileset}\label{section:approximation}

In this section, we propose an approximation algorithm for \mfts{} with unbounded scenario size. 
Motivated by the structural insights of \autoref{section:structure}, we construct a tileset that induces a forest in the corresponding graph, with the property that none of its components are contained in a single scenario.
Since a component of size~$k$ requires~$k-1$ tiles, we additionally aim for small components in order to keep the resulting tileset small.

We first take as many components of size two as possible among all disjoint sets of two symbols that are not both contained in the same scenario.
This can easily be achieved by computing a maximum matching in the graph that has an edge for each candidate component.
Similarly, among all remaining symbols, we try to form many (disjoint) components of size three, without creating components that are contained in a single scenario.
For this, we employ a simple greedy strategy, that repeatedly takes any possible component until no possible candidates remain.
(While there are better packing strategies available for sets of size three, we will see that improving the packing strategy alone does not improve our approximation ratio.)
Finally, for each leftover symbol we add an individual tile (pairing that symbol in such a way as to prevent cycles). 

We give a more formal listing in~\autoref{alg:approximation}. 
We use $\bar{F}_{i}(F')=\{C\in{F' \choose i}\mid\forall S\in\mathcal{S} \colon C\nsubseteq S\}$ to denote the family of all sets of symbols in $F'$ that are of size~$i$ and not fully contained in a single scenario.
In the following, we identify connected components with their sets of vertices.

\begin{algorithm}      
  \DontPrintSemicolon

\KwIn{A set~$F$ of symbols and a set~$\S$ of scenarios, where $\mathcal{S}\subseteq2^{F}\setminus \{F\}$.}

\KwOut{A set of tiles $\mathcal{T}$.}

$\mathcal{T}_{2}\leftarrow$ maximum matching in graph $G(\bar{F}_{2}(F))$.\;

\smallskip{}
$\mathcal{P}\leftarrow$ greedy set packing of $\bar{F}_{3}(F\setminus\bigcup_{t\in\mathcal{T}_{2}}t)$.\;

$\mathcal{T}_{3}\leftarrow\bigcup_{\{f_{1},f_{2},f_{3}\}\in\mathcal{P}}\{\{f_1,f_2\},\{f_2,f_3\}\}$.\;

\smallskip{}
\lIf{$\T_2\cup\T_3 \neq \emptyset$}
{ take $f_\mathrm{root} \in \bigcup_{t\in\T_2\cup\T_3}t$ }
\lElse
{ take $f_\mathrm{root} \in F$.\;}

$\T_1\leftarrow \{\{f,f_\mathrm{root}\}\mid f\in F \setminus \bigcup_{t\in\mathcal{T}_{2}\cup\mathcal{T}_{3}}t\ , f \neq f_\mathrm{root}\}$.\;

\smallskip{}
\Return{$\mathcal{T}=\mathcal{T}_{1}\cup\mathcal{T}_{2}\cup\mathcal{T}_{3}$.}\;

  \SetAlgoRefName{$A$}
  \caption{4/3-approximation for minimum feasible tilesets}
  \label{alg:approximation}
\end{algorithm}
\begin{theorem}
  \autoref{alg:approximation} computes a 4/3-approximation for \mfts{}.
\end{theorem}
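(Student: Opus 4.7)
The plan is to compare the algorithm's output $|\T|$ with $\mathrm{OPT}$ through the forest-structured optimum provided by \autoref{lem:struc}. Write the optimum as having $n_2^*, n_3^*, n_{\geq 4}^*$ components of size two, three, and at least four respectively, so $\mathrm{OPT} = |F| - k^*$ with $k^* = n_2^* + n_3^* + n_{\geq 4}^*$ and $|F| \geq 2 n_2^* + 3 n_3^* + 4 n_{\geq 4}^*$. A direct count of the algorithm's output gives $|\T| = |F| - \mu - |\mathcal{P}|$ in case~1 (where $\mu = |\T_2|$) and $|\T| = |F| - 1$ in case~2. Feasibility follows from \autoref{cor:only_components_matter}: the graph $G(\T)$ is a forest by construction, its isolated $2$- and $3$-components are sets in $\bar{F}_2$ and $\bar{F}_3$, and the component containing $f_{\mathrm{root}}$ either equals $F \notin \S$ (case~2) or contains the pair/triple of $f_{\mathrm{root}}$'s original tile (case~1), which itself is not contained in any scenario.

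After substitution, the inequality $3 |\T| \leq 4\, \mathrm{OPT}$ reduces in case~1 to
\[
2 n_2^* + n_3^* \;\leq\; 3\mu + 3 |\mathcal{P}|,
\]
which is the central claim. I would derive two independent lower bounds. First, $\mu \geq n_2^*$, because any size-2 component $P$ of OPT lies in $\bar{F}_2(F)$, so if $P$ were vertex-disjoint from $\bigcup \T_2$ then $\T_2 \cup \{P\}$ would be a larger matching in $G(\bar{F}_2(F))$, contradicting maximality; hence every OPT pair contributes at least one vertex to $\bigcup \T_2$. Second, the size-3 components of OPT that are disjoint from $\bigcup \T_2$ form a valid 3-packing in $\bar{F}_3(F \setminus \bigcup \T_2)$, and because the OPT components are pairwise disjoint, the number of OPT 3-components destroyed by removing $\bigcup \T_2$ is at most the number of vertices of $\bigcup \T_2$ that lie inside OPT 3-components. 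By the first bound this latter count is at most $2\mu - n_2^*$, so the maximum 3-packing available to the greedy procedure has size at least $n_3^* - (2\mu - n_2^*)$, and the standard $1/3$-ratio of a maximal greedy packing in a 3-uniform hypergraph yields $3 |\mathcal{P}| \geq n_3^* + n_2^* - 2\mu$.

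If the right-hand side of the last inequality is positive, adding $3\mu$ and applying $\mu \geq n_2^*$ gives $3\mu + 3|\mathcal{P}| \geq \mu + n_2^* + n_3^* \geq 2 n_2^* + n_3^*$; otherwise $2\mu \geq n_2^* + n_3^*$, and the same conclusion follows from $3\mu = 2\mu + \mu \geq (n_2^* + n_3^*) + n_2^*$. For case~2, $\mu = |\mathcal{P}| = 0$ forces $\bar{F}_2(F) = \bar{F}_3(F) = \emptyset$, so every OPT component has size at least four, giving $k^* \leq |F|/4$ and $|\T| = |F| - 1 \leq |F| \leq \tfrac{4}{3}(|F| - k^*)$.

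The main technical obstacle is the bound $3|\mathcal{P}| \geq n_3^* + n_2^* - 2\mu$: the greedy 3-packing runs on a universe that depends adversarially on the (non-unique) maximum matching chosen in step~1, so a naive bound $|\mathcal{P}| \geq n_3^*/3$ is unavailable. The resolution is the \emph{hitting} argument above, which exploits maximality of $\T_2$ to force at least $n_2^*$ of the $2\mu$ vertices of $\bigcup \T_2$ into OPT's 2-components, thereby limiting the number of OPT 3-components the matching can destroy to $2\mu - n_2^*$ rather than the naive $2\mu$.
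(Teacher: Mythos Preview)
Your approach is correct and takes a genuinely different route from the paper's. The paper proves the ratio by a charging argument: it distributes $\tfrac{4}{3}\cdot\tfrac{i-1}{i}$ units of credit to each symbol in a size-$i$ component of the forest-structured optimum, builds an auxiliary graph $H$ whose vertices are the optimum's size-$2$ and size-$3$ components together with the remaining symbols and whose edges are the tiles of $\T_2$, and then—through a fairly long case analysis over paths, cycles, self-loops of $H$, and over all intersection patterns of $\T_3$-components with the optimum (Figure~\ref{fig:intersections})—shows that every tile of the algorithm can be paid for. You instead reduce the ratio directly to the single inequality $2n_2^*+n_3^*\leq 3\mu+3|\mathcal P|$ and establish it from two short combinatorial bounds: $\mu\geq n_2^*$ and $3|\mathcal P|\geq n_3^*+n_2^*-2\mu$ (the latter via the hitting count and the $1/3$ guarantee of a maximal $3$-packing). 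Your argument is considerably more elementary and avoids all the case analysis; the paper's argument, on the other hand, pinpoints exactly where the analysis is tight and hence what must change to beat~$4/3$.

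One small slip in the write-up: the sentence you offer for $\mu\geq n_2^*$ actually proves a different (and later-needed) statement, namely that every optimum $2$-component meets $\bigcup\T_2$; from that alone one only gets $2\mu\geq n_2^*$. The inequality $\mu\geq n_2^*$ is true for the simpler reason already implicit in your first clause: the optimum's $2$-components are pairwise disjoint elements of $\bar F_2(F)$ and hence themselves form a matching of size $n_2^*$ in $G(\bar F_2(F))$, while $\T_2$ is a \emph{maximum} matching there. You need both facts—$\mu\geq n_2^*$ for the final combination, and the hitting statement to get the $2\mu-n_2^*$ bound on destroyed triples—so state them separately.
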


\begin{proof}
\looseness=-1 We first argue that the set of tiles $\mathcal{T}=\mathcal{T}_{1}\cup\mathcal{T}_{2}\cup\mathcal{T}_{3}$
computed by \autoref{alg:approximation} is feasible for \S.
First observe that~$G(\T)$ is a forest.
This is true, because $G(\T_2\cup\T_3)$ consists of trees of sizes~2 and~3, $G(\T_1)$ is a star, and $\T_1\cap(\T_2\cup\T_3)$ contains at most one node~$(f_\mathrm{root})$.
Using \autoref{cor:only_components_matter} it only remains to show that no connected component~$C$ of $G(\T)$ is contained in any scenario~$S \in \S$, i.e. $C \cap S \subsetneq C$. By definition of \autoref{alg:approximation} this is true for all connected components of the graph $G(\T_2\cup\T_3)$. If~$\T_2\cup\T_3 \neq \emptyset$, then each component of~$G(\T)$ is a superset of a component of~$G(\T_2\cup\T_3)$, and is thus not contained in any scenario.
If~$\T_2\cup\T_3$ is empty, then~$G(\T) = G(\T_1)$ consists of a single component that is not contained in any scenario, since, by definition,~$F \notin \S$. Thus \T is feasible for~\S.

We now bound the size of \T with respect to a minimum cardinality tileset~$\mathcal{T}^{\star}$. To do this we \emph{distribute} virtual currency\emph{ (gold)} to the symbols in $F$, such that the total gold distributed is $4/3$ times the size of $\T^{\star}$. 
We later use this gold to \emph{pay} one unit of gold to certain symbols that these can in turn use to \emph{provide for} (at most) one tile of $\mathcal{T}$ that involves this symbol. 
To complete the proof, we establish that each tile of $\mathcal{T}$ is provided for by one of its two symbols.

Let $G^{\star}:=G(\mathcal{T}^{\star})$ be the graph induced by
$\mathcal{T}^{\star}$ and $\bar{F}_{i}^{\star}$ be the set of connected
components of size $i\in\{2, \ldots, |F|\}$ in $G^{\star}$. By \autoref{lem:struc}, we may assume that $G^{\star}$ is a forest. Furthermore, because each symbol appears in at least one scenario, graph~$G^{\star}$ does not contain components of size~1.
Since the symbols in a component of size $i>1$ are part of exactly $i-1$
tiles in $\mathcal{T}^{\star}$, we may distribute all available gold by giving
$4/3\cdot\frac{i-1}{i}$ gold to each symbol in a component of $\bar{F}_{i}^{\star}$,
for all $i\in\{2,\dots,|F|\}$.
This gold is used to pay symbols in what follows. We call a symbol $s \in F$ \emph{sufficiently paid} if one of the following holds: (i)~$s$~is paid, (ii)~$s$~appears in a tile $T\in\mathcal{T}_{2}$
and the other symbol of $T$ is paid, or (iii)~$s$~appears in a tile~$T\in\mathcal{T}_{3}$ and the other two symbols in the same component
of $G(\mathcal{T}_{3})$ are paid. Below, we show how to sufficiently pay all symbols. This completes the proof, since then all tiles in $\T_1 \cup \T_2 \cup \T_3$ can be provided for (note that then each tile in $\T_1$ contains its own paid symbol). We call a component of~$G^{\star}$ \emph{sufficiently paid}, if all its symbols are sufficiently paid. Let~$F_{\geq 4}^{\star} := F \setminus \bigcup_{C\in\bar{F}_{2}^{\star}\cup\bar{F}_{3}^{\star}}C$ be the set of all symbols not in components of size two or three in~$G^{\star}$. In paying the symbols we will maintain the invariant that each element of $\bar{F}_{2}^{\star}\cup\bar{F}_{3}^{\star} \cup F_{\geq 4}^{\star}$ is either sufficiently paid, or it still holds its gold (all its symbols still hold their gold, respectively).

We define a graph $H=(V,E)$ that has the components in $\bar{F}_{2}^{\star}\cup\bar{F}_{3}^{\star}$
as its vertices, as well as the symbols that are not part of these components,
i.e., $V=\bar{F}_{2}^{\star}\cup\bar{F}_{3}^{\star}\cup F_{\geq4}^{\star}$ (cf.~Figure~\ref{fig:graphG}).
In this way, each vertex of $H$ represents up to three symbols.
For each tile $T\in\mathcal{T}_{2}$ we introduce an edge connecting
the vertices of $H$ representing the two symbols of~$T$, possibly introducing self-loops.
Since $\mathcal{T}_{2}$ is a matching, and since the vertices in~$H$ represent at most three symbols each, all vertices in~$H$ have
degree at most~3.
We partition the edges of~$H$ into paths, cycles, and self-loops, and show for each how to use the gold remaining at its vertices to pay all symbols in the components of~$G^{\star}$ that are intersected by the path/cycle/self-loop.
We will ensure that every symbol (except possibly~$f_\mathrm{root}$) on a tile in~$\T_1$ is paid.
Since each symbol on a tile of $\mathcal{T}_{2}$ appears
only exactly on this and no other tile of $\T_2\cup\T_3$, it
is thus sufficient to pay only one of the two symbols on each tile
of~$\mathcal{T}_{2}$.
\begin{figure}
\begin{centering}
\resizebox{0.7\textwidth}{!}{%
\begin{tikzpicture}
 
\def \dist {.5}
 
\newcommand{\tset}[6]{
 \begin{scope}[rotate around={#3:(#1,#2)}]
  \node (#4) at (#1,#2 - \dist) [node] {};    
  \node (#5) at (#1,#2) [node] {};      
  \node (#6) at (#1,#2 + \dist) [node] {};   
  \draw (#1,#2) ellipse (.3 and .8*2*\dist);
 \end{scope}
}

\newcommand{\dset}[5]{
 \begin{scope}[rotate around={#3:(#1,#2)}]
  \node (#4) at (#1,#2 - \dist/2) [node] {};    
  \node (#5) at (#1,#2 + \dist/2) [node] {};      
  \draw (#1,#2) ellipse (.3 and \dist);
 \end{scope}
}

\def \offX {1.5*\dist}; \def \offY {1.5*\dist};

\tset{0}{0}{0}{v0}{v1}{v2}
\tset{2*\offX}{-2*\offY}{90}{v3}{v4}{v5}
\tset{4*\offX}{0}{0}{v6}{v7}{v8}
\dset{2*\offX}{2*\offY}{90}{v9}{v10}
\dset{5.5*\offX}{0}{90}{v11}{v12}
\tset{8*\offX}{0}{90}{v13}{v14}{v15}
\tset{10*\offX}{0}{0}{v16}{v17}{v18}
\dset{11.5*\offX}{\dist + 1.5*\offY}{-45}{v19}{v20}
\tset{11.5*\offX}{-\dist - 1.5*\offY}{45}{v21}{v22}{v23}
\node (v24) at (13*\offX,-\dist - 3*\offY) [node] {};    

\draw [arc] (v0) to (v5);
\draw [arc] (v3) to (v6);
\draw [arc] (v8) to (v9);
\draw [arc] (v10) to (v2);
\draw [arc] (v7) to (v12);
\draw [arc] (v11) to (v15);
\draw [arc] (v13) to (v17);
\draw [arc] (v18) to (v19);
\draw [arc] (v16) to (v23);
\draw [arc] (v21) to (v24);

\dset{9*\offX}{6*\offY}{65}{v25}{v26}
\tset{11*\offX}{5*\offY}{65}{v27}{v28}{v29}
\draw [arc] (v25) to (v29);
\dset{13*\offX}{4*\offY}{65}{v30}{v31}
\draw [arc] (v31) to (v27);
\dset{14.7*\offX}{3.15*\offY}{65}{v32}{v33}
\draw [arc] (v30) to (v33);

\tset{5.5*\offX}{2.5*\offY}{-65}{v34}{v35}{v36}
\tset{8*\offX}{3.75*\offY}{-65}{v37}{v38}{v39}
\draw [arc] (v36) to (v37);

\tset{3*\offX}{4*\offY}{75}{v}{v}{v}

\dset{7*\offX}{-3*\offY}{55}{v40}{v41}
\draw [arc, bend right=120, looseness=4] (v40) to (v41);

\tset{13.5*\offX}{-.5*\offY}{-15}{v}{v}{v}
\node (v) at (0.5*\offX,5*\offY) [node] {};

\node (v) at (5.5*\offX,5.5*\offY) [node] {};
\begin{scope}[shift={(5.5*\offX,5.5*\offY)}]
\foreach [count=\n] \x in {72,144,...,360} {
        \node (v) at (\x:\dist) [node, anchor=-90+\x] {};
};
\end{scope}

\node (v) at (8*\offX,2*\offY) [node] {};

\begin{scope}[rotate around={45:(9*\offX,-2*\offY)}]
\node (v) at (9*\offX - \dist/2,-2*\offY + \dist/2) [node] {};
\node (v) at (9*\offX - \dist/2,-2*\offY - \dist/2) [node] {};
\node (v) at (9*\offX + \dist/2,-2*\offY + \dist/2) [node] {};
\node (v) at (9*\offX + \dist/2,-2*\offY - \dist/2) [node] {};
\end{scope}

\node (v) at (3.5*\offX,-3*\offY) [node] {};

\end{tikzpicture}
}
\par\end{centering}
\caption{Illustration of the graph $H$ that has as its nodes the components of sizes~2 and~3 in $G^\star$ and all symbols that appear in components of other sizes. An edge between symbols corresponds to a tile in $\mathcal{T}_2$.
\label{fig:graphG}}
\end{figure}

Let $\mathcal{P}$ be the set of all paths in $H$ connecting
(different) vertices of degree~1 or~3 with internal nodes of degree 2.
Consider the paths in $\mathcal{P}$ one by one. We use the
gold available along path~$P\in\mathcal{P}$ of length $k$ as follows (cf.~Figure~\ref{fig:paying_paths}).
Let $N_{2},N_{3}$ be the number of internal nodes of $P$ that represent
2 and 3 symbols, respectively. Note that $P$ has no inner nodes that
represent a single symbol, since $\mathcal{T}_{2}$ is a matching, and hence
$k=1+N_{2}+N_{3}$. Also, $P$ is the only path visiting these inner
nodes and hence they all still hold their gold. Let $N_{1}^{\mathrm{end}},N_{2}^{\mathrm{end}},N_{3}^{\mathrm{end}}\leq2$
be the number of endpoints of $P$ that still hold gold and represent
1, 2, and 3 symbols, respectively. Similarly, let $N_{0}^{\mathrm{end}}$
be the number of endpoints without gold. By our invariant, the symbols or components represented by the endpoints without gold left have already been sufficiently
paid before. We make sure that all other nodes along $P$ are
sufficiently paid. We do this by, for all tiles that form the
path~$P$, paying one of the two corresponding symbols, and, in addition,
paying \emph{every} further symbol represented by nodes along $P$. Note that this preserves the invariant. The
total cost is 
\begin{equation}
C^{-}=k+N_{2}^{\mathrm{end}}+2N_{3}^{\mathrm{end}}+N_{3}-N_{0}^{\mathrm{end}}=1+N_{2}^{\mathrm{end}}+2N_{3}^{\mathrm{end}}+N_{2}+2N_{3}-N_{0}^{\mathrm{end}}.
\label{eq:cminus}
\end{equation}
\begin{figure}
\begin{centering}
\resizebox{0.9\textwidth}{!}{%
\begin{tikzpicture}

\def \dist {.5}
 
\newcommand{\tset}[6]{
 \begin{scope}[rotate around={#3:(#1,#2)}]
  \node (#4) at (#1,#2 - \dist) [node] {};    
  \node (#5) at (#1,#2) [node] {};      
  \node (#6) at (#1,#2 + \dist) [node] {};   
  \draw (#1,#2) ellipse (.3 and .8*2*\dist);
 \end{scope}
}

\newcommand{\dset}[5]{
 \begin{scope}[rotate around={#3:(#1,#2)}]
  \node (#4) at (#1,#2 - \dist/2) [node] {};    
  \node (#5) at (#1,#2 + \dist/2) [node] {};      
  \draw (#1,#2) ellipse (.3 and \dist);
 \end{scope}
}

\newcommand{\tsetshaded}[6]{
 \begin{scope}[rotate around={#3:(#1,#2)}]
  \draw [pattern=north west lines] (#1,#2) ellipse (.3 and .8*2*\dist);
  \node (#4) at (#1,#2 - \dist) [node] {};    
  \node (#5) at (#1,#2) [node] {};      
  \node (#6) at (#1,#2 + \dist) [node] {};   
 \end{scope}
}

\def \offTD {3.8*\dist};
\def \offDD {3*\dist};
\def \offSD {2.2*\dist};

\tsetshaded{0}{0}{-90}{v}{v}{v1}
\dset{\offTD}{0}{-90}{v2}{v3}
\draw [arc] (v1) to (v2);
\dset{\offTD+\offDD}{0}{-90}{v4}{v5}
\draw [arc] (v3) to (v4);
\tset{2*\offTD+\offDD}{0}{-90}{v6}{v}{v7}
\draw [arc] (v5) to (v6);
\dset{3*\offTD+\offDD}{0}{-90}{v8}{v}
\draw [arc] (v7) to (v8);

\node (v9) at (0,1) [node] {};      
\dset{\offSD}{1}{-90}{v10}{v11}
\draw [arc] (v9) to (v10);
\dset{\offSD+\offDD}{1}{-90}{v12}{v13}
\draw [arc] (v11) to (v12);
\dset{\offSD+2*\offDD}{1}{-90}{v14}{v15}
\draw [arc] (v13) to (v14);
\tset{\offSD+2*\offDD+\offTD}{1}{-90}{v16}{v}{v}
\draw [arc] (v15) to (v16);

\end{tikzpicture}
}
\par\end{centering}
\caption{Illustration of our procedure for paying all symbols represented by the nodes along a path in graph~$H$. Shaded components have been sufficiently paid for previously. For the top path we have $N_1^{\mathrm{end}}=N_3^{\mathrm{end}}=1$, $N_2=3$, $N_0^{\mathrm{end}}=N_2^{\mathrm{end}}=N_3=0$. For the bottom path we have $N_0^{\mathrm{end}}=N_2^{\mathrm{end}}=1$, $N_2=2$, $N_3=1$, $N_1^{\mathrm{end}}=N_3^{\mathrm{end}}=0$.
\label{fig:paying_paths}}
\end{figure}

Using that each endpoint of $P$ that contributes to $N_1^{\mathrm{end}}$ represents a symbol that is part of a component in $G^{\star}$ of size~$i \geq 4$, we get that the gold available at this symbol is at least $\frac{4}{3}\cdot\frac{i - 1}{i} \geq 1$.
Hence, the gold available to us is at least
\begin{equation}
C^{+}=\frac{4}{3}(N_{2}^{\mathrm{end}}+2N_{3}^{\mathrm{end}}+N_{2}+2N_{3}+\frac{3}{4}N_{1}^{\mathrm{end}}).
\label{eq:cplus}
\end{equation}

Since $N_{0}^{\mathrm{end}}+N_{1}^{\mathrm{end}}+N_{2}^{\mathrm{end}}+N_{3}^{\mathrm{end}}=2$,
we get
\begin{equation*}
  C^+ - C^- = 1 - \frac{2}{3}N_2^{\mathrm{end}} - \frac{1}{3}N_3^{\mathrm{end}} + \frac{1}{3}N_2 + \frac{2}{3}N_3.
  \label{eq:cdiff}
\end{equation*}
Hence, we have $C^{+}\geq C^{-}$, unless $N_2^{\mathrm{end}} = 2$ and $N_{0}^{\mathrm{end}}=N_{1}^{\mathrm{end}}=N_{3}^{\mathrm{end}}=N_{2}=N_{3}=0$, i.e. $P$ is of length one, connecting two tiles~$p_1, p_2 \in \bar{F}_{2}^{\star}$ by an edge which corresponds to a tile~$t \in \T_{2}$. To see that this case cannot occur, observe that, first, $p_1$ and~$p_2$ are of degree 1 in~$H$. Second, since $\mathcal{T}^{\star}$
is feasible, no component of $G^{\star}$ is contained
in a single scenario (\autoref{cor:only_components_matter}), and thus~$p_1, p_2 \in \bar{F}_{2}^{\star}\subseteq\bar{F}_{2}(F)$. This is a contradiction to~$\T_{2}$ being a maximum
matching in graph~$G(\bar{F}_{2}(F))$, as the matching can be augmented by removing~$t$ and adding~$p_1$ and~$p_2$.

Similarly to the above, we can consider all cycles in $H$ with at
most one node of degree~3 one by one. (Note that cycles with at least two nodes of degree~3 contain a path as before.) 
If a cycle of length~$k$ does not contain a node of degree~3, or the node of degree~3 is not yet sufficiently paid (and thus still holds its gold), the cost for the cycle and its available
gold are 
\[
C^{-}=k+N_{3}=N_{2}+2N_{3}=\frac{3}{4}C^{+}<C^{+},
\]
where $N_{2},N_{3}$ are the numbers of nodes of $P$ that represent
2 and 3 symbols, respectively. If the node of degree 3 has no gold left, then
it has already been sufficiently paid and $C^{-} = N_{2}+2N_{3}-3 < \frac{3}{4}C^{+}$. 
In either case, the available gold allows
to sufficiently pay all nodes along the cycle. 
Finally, each self-loop in $H$ connects two symbols in the same component~$C$ of size~2 or 3 in $G^\star$.
If~$|C|=2$, the gold available among the two symbols is $C^+ = \frac{4}{3}$, while we require only $C^- = 1$ unit of gold.
If~$|C|=3$, we have $C^+ = \frac{8}{3}$ and $C^- = 2$. 

After processing all paths, cycles, and self-loops 
all nodes of $H$ intersecting a tile of $\mathcal{T}_{2}$ are sufficiently
paid. In particular, since $\mathcal{T}_{2}$ is a maximum matching,
all components in $\bar{F}_{2}^{\star}$ are sufficiently paid. In
the next step
we ensure that all components of~$\bar{F}_{3}^{\star}$
are sufficiently paid. By construction, every element of $\bar{F}_{3}^{\star}$,
that is not sufficiently paid yet, intersects at least one tile of
$\mathcal{T}_{3}$. We can thus consider the components of $G(\mathcal{T}_{3})$
one by one and make sure to sufficiently pay each element of $\bar{F}_{3}^{\star}$
that intersects the considered component of~$G(\mathcal{T}_{3})$.

\begin{figure}[t]
\begin{centering}
\resizebox{\textwidth}{!}{%
\begin{tikzpicture}
 
   \def \dxa {3.003};
 
 \begin{scope}[shift={(0,0)}]
  \node (n00_0) at (0,0) [node] {};    
  \node (n00_1) at (0,.5) [node] {};      
  \node (n00_2) at (0,1) [node] {};   
  \draw (0,.5) ellipse (.3 and .8);
  \node (n00_3) at (.8,0) [node] {};    
  \node (n00_4) at (.8,.5) [node] {};      
  \node (n00_5) at (.8,1) [node] {};   
  \draw (.8,.5) ellipse (.3 and .8);
  \node (n00_6) at (1.6,0) [node] {};    
  \node (n00_7) at (1.6,.5) [node] {};      
  \node (n00_8) at (1.6,1) [node] {};   
  \draw (1.6,.5) ellipse (.3 and .8);
  \draw [arc] (n00_0) to (n00_3);
  \draw [arc] (n00_3) to (n00_6);
  \node (label00) at (.8,-.6) {$8,8$};
 \end{scope}

 \begin{scope}[shift={(\dxa,0)}]
  \node (n01_0) at (0,0) [node] {};    
  \node (n01_1) at (0,.5) [node] {};      
  \node (n01_2) at (0,1) [node] {};   
  \draw (0,.5) ellipse (.3 and .8);
  \node (n01_3) at (.8,0) [node] {};    
  \node (n01_4) at (.8,.5) [node] {};      
  \node (n01_5) at (.8,1) [node] {};   
  \draw (.8,.5) ellipse (.3 and .8);   
  \draw [pattern=north west lines] (1.6,.5) ellipse (.3 and .8);
  \node (n01_6) at (1.6,0) [node] {}; 
  \draw [arc] (n01_0) to (n01_3);
  \draw [arc] (n01_3) to (n01_6);
  \node (label01) at (.8,-.6) {$\frac{16}{3},5$};
 \end{scope}
 
 \begin{scope}[shift={(2*\dxa,0)}]
  \node (n03_0) at (0,0) [node] {};    
  \node (n03_1) at (0,.5) [node] {};      
  \node (n03_2) at (0,1) [node] {};   
  \draw (0,.5) ellipse (.3 and .8);
  \draw [pattern=north west lines] (.8,.5) ellipse (.3 and .8);
  \node (n03_3) at (.8,0) [node] {};    
  \draw [pattern=north west lines] (1.6,.5) ellipse (.3 and .8);
  \node (n03_6) at (1.6,0) [node] {}; 
  \draw [arc] (n03_0) to (n03_3);
  \draw [arc] (n03_3) to (n03_6);
  \node (label03) at (.8,-.6) {$\frac{8}{3},2$};
 \end{scope}
 
 \begin{scope}[shift={(3*\dxa,0)}]
  \draw [pattern=north west lines] (0,.5) ellipse (.3 and .8);
  \node (n06_0) at (0,0) [node] {};    
  \draw [pattern=north west lines] (.8,.5) ellipse (.3 and .8);
  \node (n06_3) at (.8,0) [node] {};
  \draw [pattern=north west lines] (1.6,.5) ellipse (.3 and .8);    
  \node (n06_6) at (1.6,0) [node] {}; 
  \draw [arc] (n06_0) to (n06_3);
  \draw [arc] (n06_3) to (n06_6);
  \node (label06) at (.8,-.6) {$0,0$};
 \end{scope}

 \begin{scope}[shift={(4*\dxa,0)}]
  \node (n02_0) at (0,0) [node] {};    
  \node (n02_1) at (0,.5) [node] {};      
  \node (n02_2) at (0,1) [node] {};   
  \draw (0,.5) ellipse (.3 and .8);
  \node (n02_3) at (.8,0) [node] {};    
  \node (n02_4) at (.8,.5) [node] {};      
  \node (n02_5) at (.8,1) [node] {};   
  \draw (.8,.5) ellipse (.3 and .8);   
  \node (n02_6) at (1.6,0) [node] {}; 
  \draw [arc] (n02_0) to (n02_3);
  \draw [arc] (n02_3) to (n02_6);
  \node (label02) at (.8,-.6) {$\frac{16}{3}+1,6$};
 \end{scope}

 \begin{scope}[shift={(5*\dxa,0)}]
  \node (n04_0) at (0,0) [node] {};    
  \node (n04_1) at (0,.5) [node] {};      
  \node (n04_2) at (0,1) [node] {};   
  \draw (0,.5) ellipse (.3 and .8);
  \draw [pattern=north west lines] (.8,.5) ellipse (.3 and .8);
  \node (n04_3) at (.8,0) [node] {};    
  \node (n04_6) at (1.6,0) [node] {}; 
  \draw [arc] (n04_0) to (n04_3);
  \draw [arc] (n04_3) to (n04_6);
  \node (label04) at (.8,-.6) {$\frac{8}{3}+1,3$};
 \end{scope}
 
 \def \dxb {2.8};

 \begin{scope}[shift={(0,-2.5)}]
  \draw [pattern=north west lines] (0,.5) ellipse (.3 and .8);
  \node (n07_0) at (0,0) [node] {};    
  \draw [pattern=north west lines] (.8,.5) ellipse (.3 and .8);
  \node (n07_3) at (.8,0) [node] {};
  \node (n07_6) at (1.6,0) [node] {}; 
  \draw [arc] (n07_0) to (n07_3);
  \draw [arc] (n07_3) to (n07_6);
  \node (label07) at (.8,-.6) {$1,0$};
 \end{scope}
 
 \begin{scope}[shift={(\dxb,-2.5)}]
  \node (n05_0) at (0,0) [node] {};    
  \node (n05_1) at (0,.5) [node] {};      
  \node (n05_2) at (0,1) [node] {};   
  \draw (0,.5) ellipse (.3 and .8);
  \node (n05_3) at (.8,0) [node] {};    
  \node (n05_6) at (1.6,0) [node] {}; 
  \draw [arc] (n05_0) to (n05_3);
  \draw [arc] (n05_3) to (n05_6);
  \node (label05) at (.8,-.6) {$\frac{8}{3}+2,4$};
 \end{scope}

 \begin{scope}[shift={(2*\dxb,-2.5)}]
  \draw [pattern=north west lines] (0,.5) ellipse (.3 and .8);
  \node (n08_0) at (0,0) [node] {};    
  \node (n08_3) at (.8,0) [node] {};
  \node (n08_6) at (1.6,0) [node] {}; 
  \draw [arc] (n08_0) to (n08_3);
  \draw [arc] (n08_3) to (n08_6);
  \node (label08) at (.8,-.6) {$2,1$};
 \end{scope}

 \begin{scope}[shift={(3*\dxb,-2.5)}]
  \node (n09_0) at (0,0) [node] {};    
  \node (n09_3) at (.8,0) [node] {};
  \node (n09_6) at (1.6,0) [node] {}; 
  \draw [arc] (n09_0) to (n09_3);
  \draw [arc] (n09_3) to (n09_6);
  \node (label09) at (.8,-.6) {$3,2$};
 \end{scope}

 \begin{scope}[shift={(4*\dxb,-2.5)}]
  \node (n10_0) at (0,0) [node] {};    
  \node (n10_1) at (0,.5) [node] {};      
  \node (n10_2) at (0,1) [node] {};   
  \draw (0,.5) ellipse (.3 and .8);
  \node (n10_3) at (.8,0) [node] {};    
  \node (n10_4) at (.8,.5) [node] {};      
  \node (n10_5) at (.8,1) [node] {};   
  \draw (.8,.5) ellipse (.3 and .8);
  \draw [arc] (n10_0) to (n10_3);
  \draw [arc] (n10_1) to (n10_0);
  \node (label10) at (.4,-.6) {$\frac{16}{3},5$};
 \end{scope}

 \def \dxbb {2.3};

 \begin{scope}[shift={(4*\dxb + \dxbb,-2.5)}]
  \node (n11_0) at (0,0) [node] {};    
  \node (n11_1) at (0,.5) [node] {};      
  \node (n11_2) at (0,1) [node] {};   
  \draw (0,.5) ellipse (.3 and .8);
  \draw [pattern=north west lines] (.8,.5) ellipse (.3 and .8);
  \node (n11_3) at (.8,0) [node] {};    
  \draw [arc] (n11_0) to (n11_3);
  \draw [arc] (n11_1) to (n11_0);
  \node (label11) at (.4,-.6) {$\frac{8}{3},2$};
 \end{scope}

 \begin{scope}[shift={(4*\dxb + 2*\dxbb,-2.5)}]
  \draw [pattern=north west lines] (0,.5) ellipse (.3 and .8);
  \node (n13_0) at (0,0) [node] {};    
  \node (n13_1) at (0,.5) [node] {};      
  \node (n13_2) at (0,1) [node] {};   
  \node (n13_3) at (.8,0) [node] {};    
  \node (n13_4) at (.8,.5) [node] {};      
  \node (n13_5) at (.8,1) [node] {};   
  \draw (.8,.5) ellipse (.3 and .8);
  \draw [arc] (n13_0) to (n13_3);
  \draw [arc] (n13_1) to (n13_0);
  \node (label13) at (.4,-.6) {$\frac{8}{3},2$};
 \end{scope}
 
 \def \dxc {2.37};

 \begin{scope}[shift={(0,-5)}]
  \draw [pattern=north west lines] (0,.5) ellipse (.3 and .8);
  \node (n14_0) at (0,0) [node] {};    
  \node (n14_1) at (0,.5) [node] {};      
  \node (n14_2) at (0,1) [node] {};   
  \draw [pattern=north west lines] (.8,.5) ellipse (.3 and .8);
  \node (n14_3) at (.8,0) [node] {};    
  \draw [arc] (n14_0) to (n14_3);
  \draw [arc] (n14_1) to (n14_0);
  \node (label14) at (.4,-.6) {$0,0$};
 \end{scope}
 
 \begin{scope}[shift={(\dxc,-5)}]
  \node (n12_0) at (0,0) [node] {};    
  \node (n12_1) at (0,.5) [node] {};      
  \node (n12_2) at (0,1) [node] {};   
  \draw (0,.5) ellipse (.3 and .8);
  \node (n12_3) at (.8,0) [node] {};    
  \draw [arc] (n12_0) to (n12_3);
  \draw [arc] (n12_1) to (n12_0);
  \node (label12) at (.4,-.6) {$\frac{8}{3}+1,3$};
 \end{scope}

 \begin{scope}[shift={(2*\dxc,-5)}]
  \draw [pattern=north west lines] (0,.5) ellipse (.3 and .8);
  \node (n15_0) at (0,0) [node] {};    
  \node (n15_1) at (0,.5) [node] {};      
  \node (n15_2) at (0,1) [node] {};   
  \node (n15_3) at (.8,0) [node] {};    
  \draw [arc] (n15_0) to (n15_3);
  \draw [arc] (n15_1) to (n15_0);
  \node (label15) at (.4,-.6) {$1,0$};
 \end{scope}

 \begin{scope}[shift={(3*\dxc,-5)}]
  \node (n16_0) at (0,0) [node] {};    
  \node (n16_1) at (0,.5) [node] {};      
  \node (n16_2) at (0,1) [node] {};   
  \draw (0,.5) ellipse (.3 and .8);
  \node (n16_3) at (.8,0) [node] {};    
  \node (n16_4) at (.8,.5) [node] {};      
  \node (n16_5) at (.8,1) [node] {};   
  \draw (.8,.5) ellipse (.3 and .8);
  \draw [arc] (n16_0) to (n16_3);
  \draw [arc] (n16_0) to (n16_4);
  \node (label16) at (.4,-.6) {$\frac{16}{3},5$};
 \end{scope}

 \begin{scope}[shift={(4*\dxc,-5)}]
  \node (n17_0) at (0,0) [node] {};    
  \node (n17_1) at (0,.5) [node] {};      
  \node (n17_2) at (0,1) [node] {};   
  \draw (0,.5) ellipse (.3 and .8);        
  \draw [pattern=north west lines] (.8,.5) ellipse (.3 and .8);
  \node (n17_3) at (.8,0) [node] {};    
  \node (n17_4) at (.8,.5) [node] {}; 
  \node (n17_5) at (.8,1) [node] {};
  \draw [arc] (n17_0) to (n17_3);
  \draw [arc] (n17_0) to (n17_4);
  \node (label17) at (.4,-.6) {$\frac{8}{3},2$};
 \end{scope}

 \begin{scope}[shift={(6*\dxc,-5)}]
  \draw [pattern=north west lines] (0,.5) ellipse (.3 and .8);
  \node (n18_0) at (0,0) [node] {};    
  \node (n18_1) at (0,.5) [node] {};      
  \node (n18_2) at (0,1) [node] {};           
  \draw [pattern=north west lines] (.8,.5) ellipse (.3 and .8);
  \node (n18_3) at (.8,0) [node] {};    
  \node (n18_4) at (.8,.5) [node] {}; 
  \node (n18_5) at (.8,1) [node] {};
  \draw [arc] (n18_0) to (n18_3);
  \draw [arc] (n18_0) to (n18_4);
  \node (label18) at (.4,-.6) {$0,0$};
 \end{scope}
 
 \begin{scope}[shift={(5*\dxc,-5)}]
  \draw [pattern=north west lines] (0,.5) ellipse (.3 and .8);
  \node (n19_0) at (0,0) [node] {};    
  \node (n19_1) at (0,.5) [node] {};      
  \node (n19_2) at (0,1) [node] {};   
  \node (n19_3) at (.8,0) [node] {};    
  \node (n19_4) at (.8,.5) [node] {};      
  \node (n19_5) at (.8,1) [node] {};   
  \draw (.8,.5) ellipse (.3 and .8);
  \draw [arc] (n19_0) to (n19_3);
  \draw [arc] (n19_0) to (n19_4);
  \node (label19) at (.4,-.6) {$\frac{8}{3},2$};
 \end{scope}

 \begin{scope}[shift={(7*\dxc,-5)}]
  \draw (0,.5) ellipse (.3 and .8);
  \node (n20_0) at (0,0) [node] {};    
  \node (n20_1) at (0,.5) [node] {};      
  \node (n20_2) at (0,1) [node] {};           
  \draw [arc] (n20_0) to (n20_1);
  \draw [arc] (n20_1) to (n20_2);
  \node (label20) at (0,-.6) {$\frac{8}{3},2$};
 \end{scope}

\end{tikzpicture}
}
\par\end{centering}

\caption{Possible intersections of components of $G(\mathcal{T}_{3})$
(arcs) and $G^{\star}$ (ellipses). 
Shaded components have been sufficiently paid previously. Configurations are labeled by the available gold $C^{+}$
and the required gold $C^{-}$. Symmetrical configurations
are omitted.\label{fig:intersections}}
\end{figure}
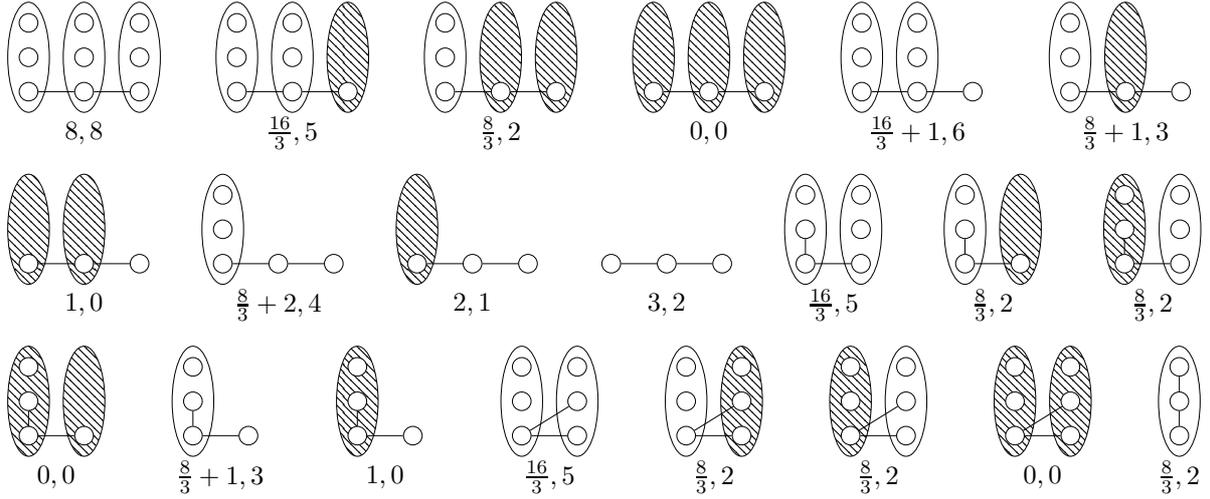

Consider a component of $G(\mathcal{T}_{3})$ involving the three
symbols $f_{1},f_{2},f_{3}$ (cf.~Figure~\ref{fig:intersections}
 in the following). Let $\mathcal{C}_{3}\subseteq\bar{F}_{3}^{\star}$
be the set of components of size~3 in $G^{\star}$ that involve at least one
of these symbols and have not yet been sufficiently paid (i.e., still
hold their gold). Further, let $N_n$ be the number of symbols among~$\{f_1, f_2, f_3\} \cap F_{\geq 4}^{\star}$ that are not yet sufficiently paid. Since all components in $\bar{F}_{2}^{\star}$
are sufficiently paid, the gold we have available is at least $C^{+}\geq\frac{4}{3}(2|\mathcal{C}_{3}|+\frac{3}{4}N_n)$. We ensure that (at least) two symbols
among $f_{1},f_{2},f_{3}$ are paid, as well as all other symbols appearing
in~$\mathcal{C}_{3}$. In this way, each component in $\mathcal{C}_{3}$
is sufficiently paid. Note that this preserves our invariant that each element of $\bar{F}_{2}^{\star}\cup\bar{F}_{3}^{\star} \cup F_{\geq 4}^{\star}$ is either sufficiently paid, or still holds its gold.
The cost for paying the symbols $f_{1},f_{2},f_{3}$
is at most 2. 
Since in addition to~$f_1,f_2,f_3$ there are~$3|\mathcal{C}_3| + N_n - 3$ symbols needing pay in $\bigcup_{C \in \mathcal{C}_3}C \cup \{f_1, f_2, f_3\}$, and because $|\mathcal{C}_{3}|\leq3$, the total cost is 
\[
C^{-} \leq 3|\mathcal{C}_{3}|+N_n- 1\leq \frac{8}{3}|\mathcal{C}_3|+N_n \leq C^{+}.
\]

At this point, we have sufficiently paid all components in $\bar{F}_{2}^{\star}\cup\bar{F}_{3}^{\star}$ using gold only from these components.
This means that all remaining symbols that are not sufficiently paid yet have at least $\frac{4}{3}\cdot\frac{4-1}{4}=1$
gold available, which we can use to pay these symbols themselves. Now all elements of $\bar{F}_{2}^{\star}\cup\bar{F}_{3}^{\star} \cup F_{\geq 4}^{\star}$ have been sufficiently paid and the proof is
complete.\end{proof}

Our analysis of \autoref{alg:approximation} is tight in three different spots: (i) A path of length $1$ in the graph $H$ defined above that visits a component of size 2 and a component of size 3 of the optimum solution $\mathcal{T}$ may lead to 4 tiles in our solution compared to the 3 tiles required in the optimum solution, i.e., Equations~(\ref{eq:cminus}) and~(\ref{eq:cplus}) coincide if $N_2^{\mathrm{end}}=N_3^{\mathrm{end}}=1$ and all other terms vanish.
(ii) The first intersection of a component of $G(\mathcal{T}_3)$ with components of $G^\star$ illustrated in Figure~\ref{fig:intersections} may lead to 8 tiles in our solution compared to the 6 tiles required in the optimum solution.
(iii) Each symbol of a component of size 4 in $G^\star$ might result in a single tile for this symbol only, in which case the optimum solution requires 3 tiles for the symbols of the component, while our solution requires 4 tiles.
To improve \autoref{alg:approximation} we have to address each of these three bottlenecks.
For (i), we either would have to alter the matching $\mathcal{T}_2$ to prevent the described situation, or combine the analysis to account for the loss in other places.
The aspect (ii) can easily be prevented by employing a more sophisticated set packing algorithm (e.g., the ($4/3+\varepsilon)$-approximation of Cygan~\cite{Cygan13}).
Finally, to avoid (iii), we would need to pack sets of size 4 similarly to our packing of sets of size 3. In addition to requiring one more level of analysis, this would also complicate the other levels, as we would have to include sets of size 4 in our reasoning there. 

\section{Bounded number of scenarios}\label{section:numberofscenarios}
In this section, we prove that \mfts can be solved in polynomial time when the number~$|\S|$ of scenarios is some constant. For convenience, for the course of this section, we switch to the decision variant of \mfts. That is, we equip each instance~$(F, \S)$ of \mfts\ with an additional integer~$\ell$, and we ask, whether there is a feasible tileset for $\S$ with at most~$\ell$ tiles. Clearly, solving the decision variant in polynomial time implies that also the optimization variant is solvable in polynomial time. We provide an algorithm that solves any instance~$(F,\S,\ell)$ in time $f(|\S|)|(F,\S,\ell)|^c$, i.e., in time $\Oh(|(F,\S,k)|^c)$ for bounded values of~$|\S|$. In other words, \mfts is fixed-parameter tractable with respect to the number of scenarios. 

Our algorithm works by first translating the input instance~$(F,\S,\ell)$ into an integer linear program (ILP) in such a way that the ILP is feasible (i.e., contains at least one integer point) if and only if~$(F,\S,\ell)$ admits a feasible tileset with at most~$\ell$~tiles. The ILP uses~$\Oh(|\S|^{|\S|})$ variables. Lenstra~\cite{Lenstra83} proved that deciding feasibility of any ILP is fixed-parameter tractable with respect to the number of variables; the currently fastest algorithm was obtained by Frank and Tardos~\cite{FT87}, modifying an algorithm by Kannan~\cite{Kannan87}.
\begin{theorem}[Frank and Tardos~\cite{FT87}]\label{theorem:ilpfeasibility:kannan}
  In $\Oh^*(p^{\Oh(p)})$ time we can decide whether a given ILP with $p$~variables is feasible.
\end{theorem}
Using this, we can prove the following result.

\begin{theorem}\label{theorem:boundedscenarios:fpt}
\mfts on instances with at most~$k$ scenarios can be solved in time~$\Oh^*(k^{\Oh(k^{k+1})})$.
\end{theorem}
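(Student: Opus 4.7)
The plan is to encode the problem as an integer linear program whose variable count depends only on $k = |\S|$, and then invoke \autoref{theorem:ilpfeasibility:kannan}. By \autoref{lem:partition-equiv} it suffices to solve the equivalent \fcp{} question: does $F$ admit a partition into at least $|F| - \ell$ parts, none of which is contained in any scenario? The first step is to classify each symbol $f \in F$ by its \emph{type} $\tau(f) := \{S \in \S : f \in S\} \subseteq \S$; there are at most $2^k$ realized types. Write $T$ for the set of realized types and $n_t$ for the number of symbols of type $t$. A part $P$ satisfies the \fcp{} constraint precisely when $\bigcap_{f \in P} \tau(f) = \emptyset$, so validity depends only on the multiset of types in~$P$, and symbols sharing a type are interchangeable.

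The ILP itself would have one nonnegative integer variable $x_s$ per admissible \emph{shape} $s = (s_t)_{t \in T}$, where admissibility means $\bigcap \{t \in T : s_t > 0\} = \emptyset$; the intended meaning is that $x_s$ counts how many parts have shape $s$. The symbol-conservation constraints $\sum_s s_t \cdot x_s = n_t$ for every $t \in T$ together with the part-count constraint $\sum_s x_s \geq |F| - \ell$ then make the ILP feasible if and only if the \mfts{} instance admits a tileset of size at most~$\ell$.

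The main obstacle is to restrict this ILP to a canonical family of at most $\Oh(|\S|^{|\S|})$ shapes, since a naive enumeration of shapes would be far too large. I would establish this via an exchange argument showing that some optimum \fcp{} partition satisfies: (i)~each part contains at most one symbol of each type $t \neq \S$, since a duplicate of type $t \neq \S$ can be split off as its own singleton part (which is valid precisely because $t \neq \S$) without invalidating the remainder, strictly increasing the part count; and (ii)~all symbols of type $\S$ (if any) are absorbed into a single part, which preserves its validity because adjoining type-$\S$ symbols does not alter the intersection of types. Combined with the observation that every admissible support contains a sub-support of size at most $|\S|$ (for each scenario $S$, select one type missing~$S$), this bounds the number of relevant shapes by $\Oh(|\S|^{|\S|})$. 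Feeding the resulting ILP with $p = \Oh(k^k)$ variables into \autoref{theorem:ilpfeasibility:kannan} decides feasibility, and hence solves \mfts{} after trying all possible values of~$\ell$, in time $\Oh^*(p^{\Oh(p)}) = \Oh^*(k^{\Oh(k^{k+1})})$, as required.
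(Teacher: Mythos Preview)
Your reduction to \fcp{} and the idea of classifying symbols by type are natural, but the exchange argument~(i) is incorrect. A singleton part $\{f\}$ is valid in \fcp{} only when $\tau(f)=\emptyset$, i.e., when $f$ lies in \emph{no} scenario; the condition $\tau(f)\neq\S$ merely says that $f$ avoids \emph{some} scenario, which does not prevent $\{f\}$ from being a subset of another one. Under the standing assumption that every symbol occurs in at least one scenario, singletons are therefore never valid, and duplicates cannot be split off as you describe. One can instead move a duplicate into a different existing part (adding a symbol can only shrink the type intersection, so validity is preserved), but this does not increase the part count, and if $n_t$ exceeds the number of parts then no rearrangement can bring all multiplicities for type~$t$ down to~$1$.

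Even granting a normalization to parts with support size at most~$k$ and multiplicity at most~$1$ per type, the claimed bound of $\Oh(k^k)$ shapes does not follow. A shape is then essentially a subset of the type set~$T$ of size at most~$k$; since $|T|$ can be as large as $2^k$, the number of such subsets is on the order of $\binom{2^k}{k}=2^{\Theta(k^2)}$, which dwarfs $k^k=2^{k\log k}$. Feeding $p=2^{\Theta(k^2)}$ variables into \autoref{theorem:ilpfeasibility:kannan} would still establish fixed-parameter tractability in~$k$, but with a running time far exceeding the stated $k^{\Oh(k^{k+1})}$. The paper obtains the sharper bound by encoding the tileset directly rather than the partition: its variables are indexed by partitions of subsets of~$[k]$ (symbol patterns) and by pairs of disjoint subsets of~$[k]$ (tile patterns), and the number of subpartitions of~$[k]$ is at most $k^k+1$, independently of how many distinct types are realized among the symbols.
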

Intuitively, a bounded number of scenarios also implies a bound on the number of different subsets of scenarios in which a tile can appear. Thus, one would like to forget the actual identities of the symbols and only remember \emph{how many} symbols appear, say, exactly in scenarios~$S_1$,~$S_5$, and~$S_6$. It appears, however, that grouping symbols in this way is insufficient since symbols from the same group can nevertheless have different patterns for how they are provided by tiles: E.g., one tile could provide such a symbol in all three scenarios~$S_1$,~$S_5$, and~$S_6$, whereas other symbols of the same group might need three separate tiles for~$S_1$,~$S_5$, and~$S_6$. To cope with this, the constructed ILP has separate variables for all partitions of scenario subsets as well as variables for all ways of using a tile (recall that a tile has two symbols, meaning that it has two disjoint subsets of the scenario that express when either symbol is provided by the tile).

\begin{proof}[Proof of \autoref{theorem:boundedscenarios:fpt}]
We formulate \mfts as an ILP and employ Kannan's algorithm. 
Intuitively, each tile contributes both of its symbols to different (disjoint) subsets of the scenarios. 
For example, if we have $5$ scenarios, a tile might contribute one of its symbols to scenarios~$1$ and $4$, the other to scenarios~$3$ and~$5$, and neither to scenario~$2$.
Each tile is associated with such a pattern of how it contributes to scenarios, and one part of the variables of our ILP track the number of tiles having each of the possible patterns.
On the other hand, each symbol has a pattern associated with it, depending on which occurrences of the symbol are provided by the same tile.
In our example, a symbol appearing in scenarios $1$, $2$, and $4$ might be provided by the same tile in scenarios $1$ and $4$, and by a different tile in scenario~$2$.
The remaining variables of the ILP track the number of symbols having each of the possible patterns.
We provide exchange arguments to show that enforcing correct totals for these variables  by linear constraints is sufficient to ensure that a feasible assignment of tiles to symbols exists for each scenario.

\emph{ILP formulation.} To make our description precise, let an instance~$(F,\S,\ell)$ with $k$ scenarios $\S=\{S_1,\ldots,S_k\}$ be given. For brevity, we refer to a subset of~$\S$ by the corresponding index set. For every subset~$I\subseteq [k]$ of scenarios we count the number of symbols that occur exactly in these scenarios and denote this number by~$c_I = |\bigcap_{i\in I}S_i \setminus \bigcup_{i\notin I} S_i|$. The family of all partitions of $I$ is denoted by $\partition(I)$. The ILP is constructed as follows. 

\begin{enumerate}
  
 \item For each set~$I\subseteq[k]$ and each partition~$\I = \{I_1,\ldots,I_s\} \in \partition(I)$ we introduce a variable~$y_\I$. 
 The intention is that variable~$y_\I$ counts the number of symbols that occur (exactly) in scenarios~$I:=I_1\cup\ldots\cup I_s$ and have pattern \I associated with them, in the following way: 
 Exactly~$s$ tiles, say,~$T_1,\ldots,T_s$, are used for such a symbol and the symbol is provided by tile~$T_i$ in the scenarios ~$I_i$.
 
 For each~$I$ we add a constraint that enforces the total number of patterns to equal the  number~$c_I$ of symbols that occur in the scenarios $I$:
 \begin{align*}
  c_I=\sum_{\I\in\partition(I)} y_\I\qquad \forall I\subseteq [k].
 \end{align*}
 
 For example, if~$I = \{1,2,3\}$, the following variables are created:
 \[
 y_{\{\{1,2,3\}\}}, y_{\{\{1,2\},\{3\}\}}, y_{\{\{1,3\},\{2\}\}}, y_{\{\{2,3\},\{1\}\}}, y_{\{\{1\},\{2\},\{3\}\}}.
 \]
 
 The number of~$y$-variables equals the number of subpartitions of the set~$[k]$. This is upper bounded by~$k^k+1$: We can~$k$-color all subpartitions other than the partition into singletons by using color~$k$ for all unused elements and colors~$1,\ldots,k-1$ for the elements of each set in the partition (only the partition into singletons has~$k$ sets). Thus, we get an injective mapping of all but one subpartition into the~$k$ colorings of~$[k]$; this gives a total of~$k^k+1$.
 \item For the tiles, we introduce variables~$x_{I,J}$ for all~$I,J\subseteq [k]$ with~$I\cap J=\emptyset$ and~$I\cup J\neq\emptyset$; for convenience we identify~$x_{I,J}=x_{J,I}$. 
Intuitively, the variable~$x_{I,J}$ stands for the number of tiles that provide one of their symbols for scenarios~$I$ and the other symbol for scenarios~$J$.
 
For example, for~$k=3$ we create the following variables:
 \begin{align*}
&x_{\emptyset,\{1\}}, x_{\emptyset,\{2\}}, x_{\emptyset,\{3\}}, x_{\emptyset,\{1,2\}}, x_{\emptyset,\{1,3\}}, x_{\emptyset,\{2,3\}},x_{\emptyset,\{1,2,3\}},\\
&x_{\{1\},\{2\}},x_{\{1\},\{3\}},x_{\{1\},\{2,3\}},x_{\{2\},\{3\}},x_{\{2\},\{1,3\}},x_{\{3\},\{1,2\}}\text{.}
 \end{align*}
 
 The number of~$x$-variables is~$\frac{3^k-1}{2}$ corresponding to all partitions of~$[k]$ into three sets (i.e.,~$I$,~$J$, and $[k]\setminus (I \cup J)$), without~$I=J=\emptyset$, and identifying~$x_{I,J}$ with~$x_{J,I}$.

 We add constraints that enforce that the number of tiles of each pattern match the sum of the corresponding $y$-variables. Concretely, we add
 \begin{align}
  \sum_{\substack{I \subseteq J\subseteq[k]\\ \J\in\partition(J)\\ I\in\J}}y_{\J}=\sum_{J\subseteq[k]\setminus I}x_{I,J} \qquad \forall  I\subseteq [k], I \neq \emptyset .\label{constraint:xy}
 \end{align}
 
 We compare the number of tiles that provide one of their symbols for scenarios in $I$ with the number of symbols that have~$I$ in their pattern. 
 For the set of scenarios~$J$ such symbols appear in we must have~$I\subseteq J\subseteq[k]$, and we need partitions~$\J\in\partition(J)$ that contain~$I$.
 
 \item As a final constraint we enforce that the total number of used tiles is no more than~$\ell$. To this end, we simply sum over all~$x$-variables and add
 \begin{align*}
  \frac{1}{2} \sum_{\substack{I,J\subseteq [k]\\I\cap J=\emptyset\\I \cup J \neq \emptyset}}x_{I,J}\leq\ell.
 \end{align*}
\end{enumerate}
This completes our construction. We use~$p\leq k^k+1+\frac{3^k-1}{2}=\Oh(k^k)$ variables and, thus, Kannan's algorithm decides feasibility of our ILP in time~$\Oh^*(p^{\Oh(p)})=\Oh^*((k^k)^{\Oh(k^k)})=\Oh^*(k^{\Oh(k^{k+1})})$.

\emph{Correctness.} 
First assume that the given instance~$(F,\S,\ell)$ of \mfts admits a feasible tileset \T of minimum cardinality $|\T|\leq\ell$. 
Since~$\T$ is feasible for each scenario~$S_i\in\S$, we may let~$\varphi_i\colon S_i\to\T$ be an injective function that assigns each symbol in~$S_i$ a unique tile in~$\T$ that can provide it. We specify feasible values for the~$x$- and~$y$-variables.

\begin{enumerate}
 \item \emph{$x$-variables.} 
 Each tile~$T\in\T$ has two symbols, say,~$T=\{s,s'\}$, and, hence, for each~$i\in[k]$ it is the image of at most one of~$s$ and~$s'$. Formally, let
    \begin{align*}
    I&:=\{i\in[k]\mid \varphi_i(s)=T\},\\
    J&:=\{j\in[k]\mid \varphi_j(s')=T\}.
    \end{align*}
That is, the set~$I$ contains all scenarios for which tile~$T$ provides symbol~$s$, and~$J$ is the analogue for symbol~$s'$. Since the functions~$\varphi_i$ are injective, we must have that~$I\cap J=\emptyset$.
  
We have~$I\cup J\neq\emptyset$ as otherwise~$T$ would not be used for any scenario, contradicting the minimality of~$\T$. We say that tile~$T$ has \emph{pattern}~$\{I,J\}$.

For each~$I,J\subseteq[k]$ with~$I\cap J=\emptyset$ and~$I\cup J\neq\emptyset$, we set~$x_{I,J}$ to the number of tiles with pattern~$\{I,J\}$. Clearly, the constraint forcing the total value of the~$x$-variables to be at most~$\ell$ is fulfilled since~$|\T|\leq \ell$.

 \item \emph{$y$-variables.} Similarly to the tiles in~$T$ we determine a pattern for each symbol~$s\in F$. We let~$\T(s):=\{T\in \T\mid \exists i\in[k]:\varphi_i(s)=T\} = \{T_1,\dots,T_r\}$, i.e., the set of tiles that provide~$s$ in at least one scenario. Let~$I\subseteq[k]$ be the set of scenarios containing~$s$. We define a partition~$\{I_1,\ldots,I_r\}$ of~$I$ by
 \begin{align*}
  I_p:=\{i\in[k]\mid \varphi_i(s)=T_p\},
 \end{align*}
 for all $p \in [r]$.
 We say that symbol~$s$ has \emph{pattern}~$\{I_1,\ldots,I_r\}\in\partition(I)$.

 For each~$I\in[k]$ and each partition~$\I\in\partition(I)$ we set~$y_{\I}$ to the number of symbols in~$F$ with pattern~$\I$. Clearly, this fulfills the constraint that all~$y$-variables whose pattern is a partition of some set~$I\subseteq [k]$ equals the total number~$c_I$ of symbols that occur exactly among the scenarios in~$I$.
\end{enumerate}
It remains to verify that the constraint relating~$x$- and~$y$-variables is satisfied. To this end, let us fix some~$I\subseteq[k]$, $I\neq\emptyset$, and consider the constraint
 \begin{align*}
  \sum_{\substack{I\subseteq J\subseteq[k]\\\J\in\partition(J)\\I\in\J}}y_{\J}=\sum_{J\subseteq[k]\setminus I}x_{I,J}.
 \end{align*}
 
For each tile~$T\in\T$ that contributes to the right-hand-side, there must be a unique symbol~$s$ in~$F$, such that~$\varphi_i(s) = T$ if and only if~$i \in I$. For this symbol, we have~$T\in\T(s)$, the set of scenarios~$J$ containing~$s$ satisfies~$I\subseteq J\subseteq [k]$, and $I$ is part of the pattern of~$s$. 
Hence, $s$ contributes to the left-hand-side.
Conversely, if~$s$ is a symbol contributing to the left-hand-side, then~$I$ must be part of the pattern of~$s$.
This means that there is a unique tile $T\in\T$, such that~$\varphi_i(s) = T$ if and only if~$i \in I$.
This tile has $I$ in its pattern and thus contributes to the right-hand-side.
Overall, the contribution to both sides is equal, and our assignment to~$x$- and~$y$-variables is feasible, as claimed.

Now, assume that the ILP constructed from $(F,\S,\ell)$ is feasible and fix a feasible assignment to the $x$- and $y$-variables. 
We derive a feasible tileset for all scenarios in~$\S$.
The set of all symbols can be partitioned according to the scenarios~$I\subseteq[k]$ that each symbol appears in. 
The total count~$c_I$ of symbols in~$I$ is matched by the sum of~$y$-variables that are indexed by the partitions~$\I\in\partition(I)$. 
We arbitrarily assign to each symbol with scenario set~$I$ a pattern~$\I\in\partition(I)$ under the sole constraint that the total number of symbols with pattern~$\I$ matches the corresponding variable~$y_{\I}$. 
For a symbol with assigned pattern~$\I=\{I_1,\ldots,I_r\}$ the intention is to use~$r$ tiles~$T_1,\ldots,T_r$ that are each responsible for one set~$I_p\in\I$.

We will use a number of tiles that exactly matches the sum of~$x$-variables, and thereby ensure that the final tileset has cardinality at most~$\ell$. We do not pick symbols for each tile but, according to the~$x$-variables, we pick for each tile two disjoint sets of scenarios in which its two symbols will be used. Concretely, exactly~$x_{I,J}$ tiles will be used in~$I$-scenarios for one symbol and in~$J$ scenarios for their other symbol, i.e., we use~$x_{I,J}$ tiles of pattern~$\{I,J\}$. Recall that~$I\cap J=\emptyset$ and that the sum of these variables does not exceed the maximum number of allowed tiles $\ell$.

Finally, we assign symbols to tiles according to symbol and tile patterns in a canonical way. 
Specifically, symbols whose pattern contains some fixed~$I\subseteq[k]$ are assigned to tiles that contain~$I$ in their pattern. 
By constraint~\cref{constraint:xy} the number of symbols and the number of tiles are equal. 
Note that each tile is used for two disjoint sets~$I,J\subseteq[k]$ and each variable~$x_{I,J}$ appears in two~\cref{constraint:xy}-constraints (for~$I$ and for~$J$). 
Thus, each tile with pattern~$\{I,J\}$ is assigned two symbols, one requiring the tile for the scenarios in~$I$ and the other requiring it the ones in~$J$. 
Similarly, a symbol with pattern~$\I=\{I_1,\ldots,I_r\}$ contributes to~$r$ constraints~\cref{constraint:xy}, one for each~$I_1,\ldots,I_r$. 
Accordingly, these constraints enforce the correct sum of the corresponding variables~$x_{I_1,\cdot},\ldots,x_{I_r,\cdot}$. (Recall that we identified~$x_{I,J}$ with~$x_{J,I}$.)

We argue that the constructed tileset is indeed feasible for all scenarios~$S_i\in\S$.
Consider any symbol~$s\in S_i$ with pattern $\J$.
Since~$s$ appears in $S_i$, we have~$i\in I \in \J$ for some set~$I$.
By the above, we know that there is a tile~$T$ containing~$s$ that has~$I$ as a part of its pattern~$\{I,J\}$.
Since, by definition,~$I\cap J = \emptyset$, we have~$i\notin J$ and may safely use~$T$ for symbol~$s$ in scenario~$S_i$.
\end{proof}

\section{Bounded number of symbols}\label{section:numberofsymbols}
We now analyze the influence of the number of symbols~$|F|$ on the complexity of solving an instance $(F,\S,\ell)$ of the decision variant of \mfts. (That is, as in section \autoref{section:numberofscenarios}, we want to decide whether there is a feasible tileset for $\S$ with at most $\ell$ tiles.) It is easy to see that the problem becomes solvable in polynomial time when~$F$ is bounded:
The instance is trivial if~$\ell\geq|F|$ since, in that case, we can afford to dedicate a separate tile for each symbol. 
Otherwise, there are only~$\Oh(|F|^{2\ell}) \subseteq \Oh(|F|^{2|F|})$ ways to fix~$\ell$ tiles.
As mentioned in \autoref{section:hardness}, each candidate tileset can be verified by solving a bipartite matching problem for each scenario, on a graph that has an edge between each symbol in the scenario and every tile containing that symbol.
This yields an overall runtime of~$\Oh^*(|F|^{2|F|})$, and, hence, fixed-parameter tractability in~$|F|$.
Using structural insights of \autoref{section:structure} we are able to improve on this naive running time.

\begin{theorem}\label{theorem:boundedsymbols:dp}
Instances~$(F,\S,\ell)$ of the decision variant of \mfts can be solved in time~$\Oh^*(3^{|F|})$.
\end{theorem}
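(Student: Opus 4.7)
The plan is to combine the structural results of \autoref{section:structure} with a standard subset dynamic program over $F$, whose running time matches the bound $\sum_{X\subseteq F}2^{|X|}=3^{|F|}$.

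First I would use \autoref{lem:struc} to restrict attention to tilesets whose underlying graph is a forest, and then apply \autoref{cor:only_components_matter}: such a tileset $\T$ is feasible for $\S$ if and only if no connected component of $G(\T)$ is contained in any scenario. A forest on $F$ with $k$ components uses exactly $|F|-k$ edges, and conversely, any partition $\{P_1,\ldots,P_k\}$ of $F$ is realised as such a tileset by taking an arbitrary spanning tree on each $P_j$. Hence $(F,\S,\ell)$ is a \yes instance of \mfts if and only if $F$ admits a partition into $k\geq|F|-\ell$ nonempty parts, each \emph{admissible} in the sense that it is not contained in any $S\in\S$; this is precisely the \fcp instance already exploited in \autoref{lem:partition-equiv}.

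Next, I would compute the maximum admissible $k$ by a standard subset DP. For $X\subseteq F$, let $f(X)$ be the maximum number of parts in an admissible partition of $X$, setting $f(\emptyset)=0$ and $f(X)=-\infty$ when no admissible partition exists. Then
\[
f(X) \;=\; \max\bigl\{\, f(X\setminus P) + 1 \,:\, \emptyset\neq P\subseteq X,\ P \text{ admissible}\,\bigr\},
\]
evaluated in order of increasing $|X|$, and the answer to the decision problem is \yes iff $|F|-f(F)\leq\ell$. To keep the inner loop cheap I would precompute admissibility of every $P\subseteq F$ in $\Oh^*(2^{|F|})$ time by iterating, for each $S\in\S$, over all subsets of $S$ and marking them as inadmissible.

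The running time then follows from the textbook identity
\[
\sum_{X\subseteq F}2^{|X|} \;=\; \sum_{i=0}^{|F|}\binom{|F|}{i}2^i \;=\; 3^{|F|},
\]
so the total number of (outer, inner) subset pairs handled by the DP is $3^{|F|}$, each in polynomial time, yielding the claimed $\Oh^*(3^{|F|})$ bound. There is no genuine obstacle here: the structural lemmas already reduce \mfts to the admissible-partition problem, and the $3^{|F|}$ runtime is the standard cost of a subset DP whose transitions enumerate subsets; the only mild care is in the admissibility precomputation so that the inner step is truly polynomial.
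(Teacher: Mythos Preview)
Your proposal is correct and follows essentially the same approach as the paper: reduce via \autoref{lem:struc} and \autoref{cor:only_components_matter} (equivalently, \autoref{lem:partition-equiv}) to maximising the number of admissible parts in a partition of $F$, then solve this by a subset dynamic program whose total work is $\sum_{X\subseteq F}2^{|X|}=3^{|F|}$. The only cosmetic difference is the recurrence shape---the paper splits $D$ into two pieces $D'$ and $D\setminus D'$ and combines $M(D')+M(D\setminus D')$, while you peel off a single admissible block $P$ and recurse on $X\setminus P$---but both compute the same quantity with the same asymptotic cost.
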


Note that, as every symbol occurs in a scenario, $\ell \geq |F|/2$. Hence, \autoref{theorem:boundedsymbols:dp} gives a fixed-parameter algorithm also for parameter~$\ell$.

\begin{proof}[Proof of \autoref{theorem:boundedsymbols:dp}]
We describe a dynamic programing algorithm for solving an instance $(F,\S,\ell)$.
Recall that we may assume~$\ell<|F|$; otherwise the instance is trivial.
Our algorithm uses a table~$M$ of size~$2^{|F|}$ that is indexed by subsets~$D\subseteq F$, with each entry taking integer values from~$[|F|]\cup\{-\infty\}$.
At the end of the computation, each entry~$M(D)$ will be set to~$-\infty$ if~$D \subseteq S$ for some scenario~$S\in\S$, and otherwise to the maximum integer $i \in [|F|]$ for which there is a partition of $D$ into $i$ sets $D_1,\dots,D_i$ such that no scenario contains any set in $\{D_1,\dots,D_i\}$ as a subset.

In the end, by \autoref{cor:only_components_matter}, the entry~$M(F)$ contains the maximum number of components in the graph corresponding to a  feasible tileset.
Accordingly, every corresponding tileset~$\T$ has minimum cardinality.
Hence, and since each connected component $C$ in the graph~$(F,\T)$ is composed of~$|C|-1$ tiles, the instance~$(F,\S,\ell)$ admits a tileset of size~$\ell$ if and only if~$M(F)\geq |F|-\ell$.

We fill out the entries of the table in order of increasing subset sizes.
Each entry is computed via the following recurrence. (Note that the~$1$ in the maximum taken over subsets~$D'$ of~$D$ stands for the trivial partition of~$D$ into just one set. This is the best value in case that no split into at least two sets can be found such that both sets are not subsets of scenarios.)
\[
M(D)=\begin{cases}
-\infty, & \text{if } D\subseteq S \text{ for some }S\in\S,\\
\max_{\substack{D'\subset D\\2\leq|D'|\leq |D|/2}}\{1, M(D')+M(D\setminus D')\}, & \text{otherwise.}
\end{cases}
\]
Thus, for each~$D\subseteq F$ that is not a subset of a scenario we need to compute the maximum of~$M(D')+M(D\setminus D')$ over less than~$2^{|D|}$ subsets~$D'$ of~$D$. By the well-known binomial theorem the total number of evaluations taken over all~$D\subseteq F$ can be upper bounded by~$3^{|F|}$ giving us the claimed runtime.
\end{proof}

After this fixed-parameter tractability result, and taking into account the trivial bound of~$2^{|F|}$ for the number of scenarios (giving a worst-case size of instances of~$\Oh(2^{|F|}|F|)$), it is natural to ask whether polynomial-time preprocessing can simplify input instances to size polynomial in~$|F|$. We show that this is impossible unless \containment (and the polynomial hierarchy collapses). More generally, we prove that for the restricted case~$d$-\mfts, where scenarios have size at most~$d$, no polynomial-time algorithm can achieve a size of~$\Oh(k^{d-\varepsilon})$. Note that this restricted case has an essentially matching upper bound of~$|\S|< (|F|+1)^d=\Oh(|F|^d)$.\footnote{A compression to $\Oh(|F|^d)$~size can be achieved by specifying one bit for each possible scenario in~$\S$ and setting it to one if the scenario is present and zero otherwise.} As a consequence there is no reduction to size polynomial in~$|F|$ for the general \mfts problem: Any size~$\Oh(k^c)$ preprocessing for \mfts could be used for~$d$-\mfts, for any~$d>c$, and violate the lower bound.

\begin{theorem}\label{thm:encoding}
Let~$d\geq 3$ and $\varepsilon$ be a positive real. 
There is no polynomial-time algorithm that reduces every instance of~$d$-\mfts
to an equivalent instance (possibly of a different problem) of size~$\Oh(|F|^{d-\varepsilon})$, 
unless \containment.
\end{theorem}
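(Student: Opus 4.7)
The plan is to deduce the lower bound from an existing compression lower bound for a packing problem via a polynomial-time linear-parameter transformation, generalizing \autoref{lem:red-xts} from $d=3$ to arbitrary $d\geq 3$. Given an instance $(X,\mathcal{C})$ of \pdsm, I assume $|X|=qd$ for some integer $q\geq 2$ (smaller cases are handled directly in polynomial time). I build the \fcp instance $(X,\mathcal{V},q)$ with
\[
\mathcal{V}\;:=\;\textstyle\binom{X}{d-1}\;\cup\;\bigl(\binom{X}{d}\setminus\mathcal{C}\bigr),
\]
whose constraints all have size at most $d$. The correctness argument mirrors \autoref{lem:red-xts}: the $(d-1)$-element constraints force every part of a valid partition to have at least $d$ elements, and since $|X|=qd$ any valid partition of size at least $q$ must consist of exactly $q$ parts of exactly $d$ elements each; the $d$-element constraints then rule out $d$-element parts outside $\mathcal{C}$, so valid partitions of size at least $q$ correspond bijectively to exact covers by sets in $\mathcal{C}$. \autoref{lem:partition-equiv} turns this into an equivalent $d$-\mfts instance on $F=X$ (so $|F|=|X|$), and the reduction runs in polynomial time.

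With this in hand, I invoke the known compression lower bound for \pdsm parameterized by universe size: for every $d\geq 3$ and $\varepsilon>0$, no polynomial-time algorithm compresses instances to bit size $\Oh(|X|^{d-\varepsilon})$ unless \containment. This is obtained by combining the Dell--van Melkebeek oracle-communication framework with the weak-composition machinery of Dell--Marx (and Hermelin--Wu) for packing-type problems on $d$-element sets, in which $t$ \NP-hard subinstances of size $n$ are cross-composed into a single instance on $\Theta(n\cdot t^{1/d})$ universe elements. Composing a hypothetical $\Oh(|F|^{d-\varepsilon})$ compression of $d$-\mfts with the reduction above would immediately yield an $\Oh(|X|^{d-\varepsilon})$ compression of \pdsm, which is forbidden.

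The main obstacle is locating the source lower bound in exactly the form required (universe-size parameter, $d$-bounded set size, bit-size $\Oh(n^{d-\varepsilon})$). If no ready-made statement for \pdsm matches, the backup plan is to bypass the intermediate problem and perform a weak $d$-composition directly into $d$-\mfts from an \NP-hard source such as $d$-CNF-SAT, exploiting that a $d$-\mfts instance on $N$ symbols can carry up to $\binom{N}{d}=\Theta(N^d)$ distinct scenarios; this is precisely the room that matches the $\Oh(|F|^d)$ upper bound footnoted just before the theorem and makes the resulting lower bound tight.
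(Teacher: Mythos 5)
Your proposal is correct and follows essentially the same route as the paper: the paper's Lemma~\ref{lem:red-xts-gen} uses exactly your construction $\S=\binom{X}{d-1}\cup(\binom{X}{d}\setminus\mathcal{C})$ together with \autoref{lem:partition-equiv}, and then invokes the Dell--Marx lower bound for \pdsm rephrased in terms of the universe size via $k=|U|/d$ (which resolves the ``locating the source bound'' concern you raise). Your backup plan of a direct weak composition is unnecessary.
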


To prove~\autoref{thm:encoding} we employ a similar result by Dell and Marx~\cite{DellM12} for \pdsm, which is defined as follows.\footnote{Dell and Marx called this problem \textsc{Perfect $d$-Set Matching}.}

\introduceproblem{\pdsm}{A universe~$X$ and a family $\mathcal{C}$ of $d$-element sets $C\in\binom{X}{d}$.}
{Is there an \emph{exact $d$-set cover} for $X$, i.e., a partition of $X$ into a family $\mathcal{C}'\subseteq\mathcal{C}$ of disjoint sets?}

Note that the original result by Dell and Marx~\cite{DellM12} is given in terms of the size~$k$ of an exact $d$-set cover.
Clearly, $k = \frac{|U|}{d}$ and, thus, we have~$\Oh(k^{d-\varepsilon})=\Oh(|U|^{d-\varepsilon})$ and may instead phrase the result in terms of~$|U|$.
Furthermore, their result builds on work by Dell and van Melkebeek~\cite{DellM14} and, thus, extends to any polynomial time algorithms (rather than just problem kernels as mentioned there) whose output instances can be with respect to a different problem.
We give the following paraphrased version of the result.

\begin{theorem}[Dell and Marx~\cite{DellM12}]\label{theorem:dellmarx}
Let~$d\geq3$ and~$\varepsilon$ be a positive real.  
There is no polynomial-time algorithm that reduces every instance~$(U,\h)$ of \pdsm to an equivalent instance of size $\Oh(|U|^{d-\varepsilon})$ (possibly with respect to a different problem), 
unless \containment.
\end{theorem}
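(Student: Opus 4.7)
The plan is to reduce \pdsm to $d$-\mfts in polynomial time while preserving the cardinality of the universe (i.e., producing an instance $(F, \S, \ell)$ with $|F| = |U|$), and then compose this reduction with the hypothetical compression to contradict \autoref{theorem:dellmarx}. If some polynomial-time algorithm compresses every $d$-\mfts instance to an equivalent instance of size $\Oh(|F|^{d-\varepsilon})$, applying it to the output of our reduction would give a polynomial-time algorithm that compresses every \pdsm instance to an equivalent instance (of some possibly different problem) of size $\Oh(|U|^{d-\varepsilon})$, which would force \containment.

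For the reduction itself, I would generalize \autoref{lem:red-xts} from $d=3$ to arbitrary $d\ge 3$ and use \autoref{lem:partition-equiv} to pass from \fcp to \mfts. Given an instance $(U, \h)$ of \pdsm, first dispose of degenerate cases: if $|U|$ is not divisible by $d$, output a trivial \no-instance; and in the boundary case $q:=|U|/d=1$, directly check whether $U\in\h$. Otherwise set $F := U$ and
\[
\S := \binom{U}{d-1} \;\cup\; \Bigl(\binom{U}{d} \setminus \h\Bigr),
\]
with $\ell := |F| - q$. Every scenario has size at most $d$, and for $q\ge 2$ every scenario is strictly smaller than $|F|\ge 2d$, so the requirement $\S\subseteq 2^F\setminus\{F\}$ is respected and the resulting $(F,\S,\ell)$ is a well-formed $d$-\mfts instance. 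The construction takes time $\Oh(|U|^d)$.

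For correctness I would invoke \autoref{lem:partition-equiv} and argue the equivalent \fcp statement: $(F,\S,q)$ has a \yes-answer iff $(U,\h)$ admits an exact $d$-set cover. The forward direction is immediate, since an exact cover $\mathcal{C}'\subseteq\h$ is itself a partition of size $q$ whose parts have size exactly $d$ and lie in $\h$, and therefore no part is contained in any scenario. For the converse, any partition $\mathcal{P}$ with $|\mathcal{P}|\ge q$ can have no part $P$ of size less than $d$, because every such $P$ is contained in some set of $\binom{U}{d-1}\subseteq\S$; together with $|U|=dq$ this forces exactly $q$ parts, each of size exactly $d$. Each such part must then lie in $\h$ to avoid the scenarios in $\binom{U}{d}\setminus\h$, yielding an exact $d$-set cover.

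The hard part is not the reduction itself (which is a short generalization of \autoref{lem:red-xts}) but rather the quantitative bookkeeping: since $|F|=|U|$, the composition of our reduction with the hypothetical compression yields, in polynomial time, an equivalent instance (of a possibly different problem) of size $\Oh(|F|^{d-\varepsilon}) = \Oh(|U|^{d-\varepsilon})$ for an arbitrary \pdsm input. This is precisely the regime ruled out by \autoref{theorem:dellmarx}, so the assumed compression for $d$-\mfts implies \containment, as required.
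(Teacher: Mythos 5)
There is a fundamental mismatch between what you were asked to prove and what your argument establishes. The statement in question is \autoref{theorem:dellmarx} itself: the compression lower bound for \pdsm, which the paper does \emph{not} prove but imports from Dell and Marx~\cite{DellM12}. The paper's only ``proof'' content for this statement is the short paraphrasing discussion preceding it: the original result is stated in terms of the solution size~$k$, and since~$k=|U|/d$ one has~$\Oh(k^{d-\varepsilon})=\Oh(|U|^{d-\varepsilon})$; moreover, because the result rests on the Dell--van Melkebeek framework~\cite{DellM10}, it applies to arbitrary polynomial-time algorithms whose output may be an instance of a different problem. Your proposal proves none of this. Instead, you construct a reduction \emph{from} \pdsm{} \emph{to} $d$-\mfts and then, in your final paragraph, invoke \autoref{theorem:dellmarx} as the ``regime ruled out'' --- that is, you assume the very statement you were asked to prove. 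What you have actually written is a (correct, and indeed carefully argued) proof of the paper's \autoref{thm:encoding}, combining \autoref{lem:red-xts-gen} (your generalization of \autoref{lem:red-xts}, with more explicit degenerate-case handling than the paper's sketch) with \autoref{theorem:dellmarx} as a black box. As a proof of \autoref{theorem:dellmarx} it is circular.

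A genuine proof of the assigned statement would have to run in the opposite direction: one must reduce a problem \emph{already known} to admit no $\Oh(n^{d-\varepsilon})$ compression unless \containment (e.g., $d$-\textsc{CNF-SAT} in the Dell--van Melkebeek framework, via oracle communication protocols or OR-cross-composition) \emph{to} \pdsm while controlling the universe size, which is precisely the technical content of~\cite{DellM12}. Nothing in your proposal supplies that machinery, and no reduction out of \pdsm can do the job. If the intent was to prove \autoref{thm:encoding}, your argument matches the paper's approach essentially verbatim; but for \autoref{theorem:dellmarx}, the correct answer is that it is a cited external result whose proof lies outside this paper, and your attempt fills that hole with an appeal to the theorem itself.
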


The following lemma, together with \autoref{theorem:dellmarx}, directly implies \autoref{thm:encoding}.

\begin{lemma}\label{lem:red-xts-gen}
There is a polynomial-time reduction from \pdsm to \mfts such that instances~$(X,\mathcal{C})$ are mapped to instances~$(F,\S,\ell)$ with~$F=X$ and scenario size at most~$d$.
\end{lemma}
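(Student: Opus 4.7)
The plan is to generalize the reduction from \xts to \fcp given in \autoref{lem:red-xts}, replacing $3$-sets with $d$-sets, and then translate the resulting \fcp instance to an \mfts instance via \autoref{lem:partition-equiv}. Given an instance $(X,\mathcal{C})$ of \pdsm, I would first handle degenerate cases directly: if $|X|$ is not divisible by $d$ then no exact $d$-cover exists and I output a canonical no-instance of \mfts; if $|X|=d$ then I output a trivial yes- or no-instance according to whether $X\in\mathcal{C}$. Otherwise $|X|=dq$ for some $q\geq 2$, and I construct the \fcp instance $(X,\mathcal{V},q)$ with universe $X$, requested number of parts $q$, and constraint family
\[
  \mathcal{V} \;:=\; \binom{X}{d-1}\;\cup\;\Bigl(\binom{X}{d}\setminus\mathcal{C}\Bigr).
\]
By \autoref{lem:partition-equiv} this is equivalent to the \mfts instance $(X,\mathcal{V},|X|-q)$. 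Setting $F:=X$ and $\ell:=(d-1)q$ yields the required \mfts instance; every scenario in $\mathcal{V}$ has size $d-1$ or $d$, hence at most $d$, and since $|X|=dq>d$ we have $F\notin\mathcal{V}$ so the instance is well-formed.

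The correctness argument mirrors that of \autoref{lem:red-xts}. For the forward direction, suppose $\mathcal{P}$ is a valid partition of $X$ with $|\mathcal{P}|\geq q$. The $(d-1)$-element constraints in $\binom{X}{d-1}$ force every part of $\mathcal{P}$ to have size at least $d$, and since the $|\mathcal{P}|\geq q$ parts sum to $|X|=dq$, every part has size exactly $d$ and $|\mathcal{P}|=q$. The constraints in $\binom{X}{d}\setminus\mathcal{C}$ then ensure each such $d$-element part lies in $\mathcal{C}$, so $\mathcal{P}$ is an exact $d$-cover. Conversely, given an exact $d$-cover $\mathcal{C}'\subseteq\mathcal{C}$, the partition $\mathcal{P}:=\mathcal{C}'$ has $|\mathcal{P}|=q$ and each part is a $d$-element set from $\mathcal{C}$; it is therefore contained in no $V\in\mathcal{V}$, since $V$ has size $d-1$ or is a $d$-element set not in $\mathcal{C}$.

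Finally, I would verify the running time: the constructed instance lists at most $\binom{|X|}{d-1}+\binom{|X|}{d}=\Oh(|X|^d)$ scenarios, which is polynomial in $|X|$ for fixed $d$, and the construction is clearly performed in time polynomial in this output size. There is no real technical obstacle here; the only things to be careful about are the divisibility check and the small cases $|X|\leq d$, both of which are handled upfront, and the observation that $F\notin\S$ follows automatically from $|X|>d$ in the main case.
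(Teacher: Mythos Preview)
Your proposal is correct and follows essentially the same route as the paper's proof sketch: both construct the \fcp instance with constraint set $\binom{X}{d-1}\cup(\binom{X}{d}\setminus\mathcal{C})$ and then invoke \autoref{lem:partition-equiv}. You are simply more explicit about the degenerate cases ($d\nmid|X|$ and $|X|\leq d$), the computation $\ell=(d-1)q$, and the well-formedness check $F\notin\S$, all of which the paper leaves implicit.
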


\begin{proof}[Proof sketch of \autoref{lem:red-xts-gen}]
  The proof is similar to the proof of \NP-hardness in \autoref{thm:restricted-hard}. Given an instance of \pdsm{} with universe~$X$ and a family~$\mathcal{C}$ we construct an instance~$(F, \S, (d - 1)n/d)$ of \mfts\ with~$F = X$ and $n = |X|$. Applying the equivalence of finding a feasible tileset of size $(d-1)n/d$ and finding an admissible partition for $\S$ of size $n/d$ then gives \autoref{lem:red-xts-gen}.
  
  To construct the instance of \mfts, we simply set~$\S = \binom{X}{d - 1} \cup (\binom{X}{d} \setminus \mathcal{C})$. Similarly to the reduction used for \autoref{thm:restricted-hard}, the scenarios~$\binom{X}{d - 1}$ enforce that every admissible partition contains only parts of size exactly~$d$. The constraints~$\binom{X}{d} \setminus \mathcal{C}$ enforce that only sets of~$\mathcal{C}$ occur in an admissible partition. Hence, each admissible partition is also an exact $d$-set cover and vice-versa.
\end{proof}

We now consider a more general setting: In the \gmfts problem we are also given a set of symbols and a set of scenarios, but here each scenario may be a \emph{multi-set} of symbols (or, equivalently, each scenario is a function~$S\colon F\to\N$ indicating the number of copies of each symbol~$f$ needed for~$S$). We prove that \gmfts can be solved in time~$\Oh^*(|F|^{\Oh(|F|^2)})$. Note that for this problem the solution size~$\ell$ may be much larger than~$|F|$ and similarly the number of scenarios cannot in general be bounded in~$|F|$.

\begin{theorem}\label{theorem:boundedsymbols:fpt}
\gmfts can be solved in time~$\Oh^*(|F|^{\Oh(|F|^2)})$, i.e., it is fixed-parameter tractable with respect to~$|F|$.
\end{theorem}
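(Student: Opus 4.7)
The plan is to encode \gmfts as an integer linear program with only $\Oh(|F|^2)$ variables and then invoke \autoref{theorem:ilpfeasibility:kannan}.

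First, since tiles are still drawn from $\binom{F}{2}$, there are only $p := \binom{|F|}{2} = \Oh(|F|^2)$ possible tile types. I would introduce a non-negative integer variable $x_T$ for each $T \in \binom{F}{2}$ counting the multiplicity of $T$ in the tileset, together with the budget constraint $\sum_{T} x_T \leq \ell$ realizing the solution-size bound.

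The main step is to express feasibility of a tileset $(x_T)_T$ for a single multi-set scenario $S\colon F\to\N$ by linear constraints in the variables $x_T$. Consider the bipartite auxiliary graph whose left vertices are $S(f)$ copies of each symbol $f$ and whose right vertices are $x_T$ copies of each tile $T$, with an edge between a symbol copy and a tile copy whenever the tile contains that symbol. A valid assignment of symbols to tiles in the sense of \gmfts corresponds exactly to a matching saturating the left vertices, so by Hall's marriage theorem feasibility is equivalent to
\[
  \sum_{T \in \binom{F}{2}:\, T \cap A \neq \emptyset} x_T \;\geq\; \sum_{f \in A} S(f) \qquad \text{for every } A \subseteq F.
\]
Here the only delicate direction uses that all copies of a given symbol have the same neighborhood, so any minimal Hall-violating set on the left is a union of complete symbol-copy classes and thus of the form displayed. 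Crucially, the left-hand side does not depend on $S$, so for each fixed $A$ the constraints arising from different scenarios collapse into the single inequality with right-hand side $r_A := \max_{S\in\S}\sum_{f\in A}S(f)$, producing $2^{|F|}$ inequalities in total.

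Finally I would apply \autoref{theorem:ilpfeasibility:kannan} to the resulting ILP: its constraint matrix has entries in $\{0,1\}$, each right-hand side $r_A$ (and $\ell$) is bounded by the input's binary length, and there are $p=\Oh(|F|^2)$ variables. Kannan's algorithm then decides feasibility in time $\Oh^*(p^{\Oh(p)})=\Oh^*(|F|^{\Oh(|F|^2)})$, as claimed. I expect the Hall-type characterization to be the main obstacle, since one must carefully translate the existence of a feasibility witness (an assignment $\phi$ of tiles to symbols covering $S$) into a condition involving only the multiplicities $x_T$; the remainder is a routine ILP encoding followed by an appeal to Kannan/Frank--Tardos.
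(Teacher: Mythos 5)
Your proposal is correct and follows essentially the same route as the paper: one variable per tile type in $\binom{F}{2}$, a budget constraint, Hall-type constraints over all subsets of $F$ guaranteeing a left-saturating matching in the symbol-copies-versus-tile-copies bipartite graph, and an appeal to the Frank--Tardos/Kannan ILP feasibility algorithm. Your observation that the $2^{|F|}$ constraints can be collapsed across scenarios via $r_A=\max_{S\in\S}\sum_{f\in A}S(f)$ is a small refinement the paper does not make (it keeps $|\S|\cdot 2^{|F|}$ constraints), but it does not affect the stated running time.
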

\begin{proof}
Let~$(F,\S,\ell)$ be an instance of \gmfts{} and let~$k:=|F|$. We will construct an integer linear program~(ILP) with~$\binom{k}{2}=\Oh(k^2)$ variables and~$\Oh^*(2^k)$ constraints that is feasible if and only if~$(F,\S,\ell)$ admits a feasible tileset with at most~$\ell$ tiles. Using Kannan's algorithm (\autoref{theorem:ilpfeasibility:kannan}) then completes the proof.

We introduce one variable~$x_{s,s'} \geq 0$ for each possible tile type, i.e., for each pair of symbols~${s,s'}\in\binom{F}{2}$.
We interpret~$x_{s,s'}$ as the number of tiles of type~${s,s'}$ that the solution will contain. 
We begin with the constraint ensuring that we do not use more than~$\ell$ tiles overall:

\begin{align*}
\sum_{\{s,s'\}\in\binom{F}{2}}x_{s,s'}\leq\ell
\end{align*}

We need to add constraints to the ILP to ensure that the resulting assignment to the~$x_{s,s'}$-variables corresponds to a feasible tileset, i.e., that each scenario~$S$ can be implemented using the corresponding numbers of tiles of each type. 
This is the case if and only if there is a matching from the symbols in~$S$ to the tiles that cover all symbols in~$S$.
Clearly, in order not to use too many variables, we do not want to compute a (one-sided perfect) matching for each scenario~$S$.
By Hall's Theorem, it is instead sufficient to ensure that for each subset~$I \subset F$ of symbols appearing at least once in scenario~$S$ there are at least that many tiles involving these symbols.
If $c_{s,S}$ denotes the number of occurrences of symbol~$s$ in scenario~$S$, we obtain the following constraints:
 \begin{align*}
  \sum_{\{s,s'\}\cap I\neq\emptyset}x_{s,s'}\geq \sum_{s \in I}c_{s,S} \qquad \forall S\in\S, \forall I\subseteq F
 \end{align*}
In total we use~$\binom{k}{2}=\Oh(k^2)$ variables and~$1+m\cdot 2^k+\binom{k}{2}$ constraints. Using Kannan's algorithm for testing feasibility of an ILP with~$p$~variables in time~$\Oh^*(p^{\Oh(p)})$ (Theorem~\ref{theorem:ilpfeasibility:kannan}) we get a total running time of~$\Oh^*(k^{\Oh(k^2)})$.
\end{proof}

\section{Conclusion}\label{section:conclusion}

We initiated the study of the \mfts problem
and exposed an interesting combinatorial structure. 
We proved the problem to be \NP-complete even in the restricted case with scenarios of size at most three and \APX-hard in general. On the positive side, we showed that the \mfts{} problem admits a 4/3-approximation algorithm and that it is fixed-parameter tractable with respect to the number of scenarios and number of symbols. 
The latter algorithm works also for the \gmfts problem where each scenario can contain multiple copies of a symbol and we believe that it can be further generalized to work also for the original assignment problem where also tiles of larger (but constant) size are allowed.
It would be interesting to see whether our other positive results transfer to this more general setting. We note that our approximation algorithm relies heavily on the structural observations from \autoref{section:structure} which do not seem to generalize well. Our integer linear program for a fixed number of scenarios does not seem easily adaptable either.

\bibliographystyle{abbrv}
\bibliography{paper}

\end{document}